\newif\ifpictures
\numberwithin{equation}{section}
\newtheorem{thm}{Theorem}
\newtheorem{prop}[thm]{Proposition}
\newtheorem{lemma}[thm]{Lemma}
\newtheorem{cor}[thm]{Corollary}
\theoremstyle{definition}
\newtheorem{definition}[thm]{Definition}
\numberwithin{thm}{section}
\newcounter{FNC}[page]
\def\newfootnote#1{{\addtocounter{FNC}{2}$^\fnsymbol{FNC}$%
     \let\thefootnote\relax\footnotetext{$^\fnsymbol{FNC}$#1}}}
\newcommand{\N}{\mathbb{N}}
\newcommand{\R}{\mathbb{R}}
\newcommand\cH{{\ensuremath{\mathcal{H}}}\xspace}
\newcommand\cI{{\ensuremath{\mathcal{I}}}\xspace}
\newcommand\cJ{{\ensuremath{\mathcal{J}}}\xspace}
\newcommand\cP{{\ensuremath{\mathcal{P}}}\xspace}
\newcommand\cV{{\ensuremath{\mathcal{V}}}\xspace}
\newcommand{\alp}{\alpha}
\newcommand{\lam}{\lambda}
\newcommand{\Sig}{\Sigma}
\definecolor{DarkGreen}{rgb}{0,0.65,0}
\newcommand{\struc}[1]{{\color{blue} #1}}
\DeclareMathOperator{\conv}{conv}
\DeclareMathOperator{\New}{New}
\DeclareMathOperator{\poly}{poly}
\def\endexa{\hfill$\hexagon$}
\author{Mareike Dressler} 
\address{Mareike Dressler, Goethe-Universit\"at, FB 12 -- Institut f\"ur Mathematik,
	Postfach 11 19 32, 60054 Frankfurt am Main, Germany\medskip}
\email{dressler@math.uni-frankfurt.de}
\author{Adam Kurpisz} 
\address{Adam Kurpisz, ETH Z\"urich --  Department of Mathematics, R\"amistrasse 101,	8092 Z\"urich, Switzerland\medskip}
\email{adam.kurpisz@ifor.math.ethz.ch}
\author{Timo de Wolff}
\address{Timo de Wolff, Technische Universit\"at Berlin, Institut f\"ur Mathematik, Stra{\ss}e des 17.~Juni 136, 10623 Berlin,
 Germany\medskip}
\email{dewolff@math.tu-berlin.de}
\subjclass[2010]{
14P10, %Semialgebraic sets and related spaces
68Q25, %Analysis of algorithms and problem complexity
90C09; %Boolean Programming
\textit{ACM Subject Classification:} G.1.6 Convex programming, Nonlinear programming}
\keywords{Certificate, hypercube optimization, nonnegativity, sums of nonnegative circuit polynomials, sums of squares}
\title[Optimization over the Boolean Hypercube via SONCs]{Optimization over the Boolean Hypercube via Sums of Nonnegative Circuit Polynomials}
\begin{document}

\begin{abstract}
Various key problems from theoretical computer science can be expressed as polynomial optimization problems over the boolean hypercube.
One particularly successful way to prove complexity bounds for these types of problems are based on sums of squares (SOS) as nonnegativity certificates. In this article, we initiate the analysis of optimization problems over the boolean hypercube via a recent, alternative certificate called sums of nonnegative circuit polynomials (SONC). We show that key results for SOS based certificates remain valid: First, for polynomials, which are nonnegative over the $n$-variate boolean hypercube with constraints of degree $d$ there exists a SONC certificate of degree at most $n+d$. Second, if there exists a degree $d$ SONC certificate for nonnegativity of a polynomial over the boolean hypercube, then there also exists a short degree $d$ SONC certificate, that includes at most $n^{O(d)}$ nonnegative circuit polynomials.

\end{abstract} 
 
\maketitle

\section{Introduction}

An \struc{\textit{optimization problem over a boolean hypercube}} is an $n$-variate (constrained) polynomial optimization problem where the feasibility set is restricted to the vertices of an $n$-dimensional hypercube
\begin{eqnarray}
\label{Eq:mat_prog_formulation}
 & & \min f(\mathbf{x}), \nonumber \\
 & \text{ subject to} & p_1(\mathbf{x}),\ldots,p_m(\mathbf{x}) \geq 0, \label{Equ:HypercubeOptimizationProblem} \\ 
 & & \mathbf{x} \in \{a_i,b_i\}^n, \nonumber \\ 
 & & f,p_1,\ldots,p_m \in \R[\mathbf{x}]. \nonumber
\end{eqnarray}

The formulation~\eqref{Eq:mat_prog_formulation} captures a class of optimization problems, that belong to the core of theoretical computer science. However, it is known that solving the above formulation is NP-hard in general, since one can easily cast, e.g., the \textsc{Independent Set} problem in this framework.

One of the most promising approaches in constructing efficient algorithms is the \struc{\textit{sum of squares (SOS) hierarchy}}~\cite{GrigorievV01,Nesterov00,parrilo00,schor87}, also known as \struc{\textit{Lasserre relaxation}} \cite{Lasserre:LasserreRelaxation}. The method is based on a Positivstellensatz result~\cite{Putinar:Positivstellensatz} saying that the polynomial $f$, which is nonnegative over the feasibility set given in~\eqref{Eq:mat_prog_formulation}, can be expressed as a sum of squares times the constraints defining the set. Bounding a maximum degree of a polynomial used in a representation of $f$ provides a family of algorithms parametrized by an integer $d$. Finding a degree $d$ SOS certificate for nonnegativity of $f$ can be performed by solving a \struc{\emph{semidefinite programming} (SDP)} formulation of size $n^{O(d)}$. Finally, for every (feasible) $n$-variate unconstrained hypercube optimization problem there exists a degree $2n$ SOS certificate.

On the one hand, the SOS algorithm provide the best available approximation algorithms for
a wide variety of optimization problems.
For example, the degree 2 SOS for the \textsc{Independent Set} problem implies the Lov\'{a}sz $\theta$-function~\cite{Lovasz79} and gives the Goemans-Williamson relaxation~\cite{GoemansW95} for the \textsc{Max Cut} problem. The ARV algorithm of the \textsc{Sparsest Cut} \cite{AroraRV09} problem can be captured by SOS of degree 6. Finally, the subexponential time algorithm for \textsc{Unique Games}~\cite{AroraBS10} is implied by a SOS of sublinear degree~\cite{BarakRS11,GuruswamiS11}. More recently, it has been shown that $O(1)$ degree SOS is equivalent in power to any polynomial size SDP extended formulation in approximating maximum constraint satisfaction problems \cite{LeeRagSteu15}.
Other applications of the SOS method for combinatorial optimization can be found in~\cite{BarakRS11,BateniCG09,Chlamtac07,ChlamtacS08,DBLP:conf/soda/CyganGM13,VegaK07,GuruswamiS11,Mastrolilli17,MagenM09,RaghavendraT12}. For a more detailed overview on the use of SOS in approximation algorithms, see the surveys~\cite{Chla12,Laurent03,laurent09}.

On the other hand, it is known that the SOS algorithm admits certain weaknesses.  First, for some hypercube optimization problems the SOS algorithm performs much worse than other known methods. Grigoriev in~\cite{Grigoriev01} shows that a $\Omega(n)$ degree SOS certificate is needed to detect that the \textsc{Knapsack} instance $\{x\in [0,1]^n : \sum_{i=1}^n x_i= \lfloor n/2 \rfloor+1/2\}$ contains no integer point. Simpler proofs can be found in~\cite{GrigorievHP02,Laurent03a, KurpiszLM16}. Other SOS degree lower bounds for \textsc{Knapsack} problems appeared in~\cite{Cheung07,KurpiszLM17}.
Another example is the problem of scheduling unit size jobs on a single machine to minimize the number of late jobs. The problem is solvable in polynomial time using the Moore-Hodgson algorithm; an $\Omega(\sqrt{n})$ degree SOS algorithm, however, still attains an unbounded integrality gap~\cite{KurpiszLM17b}.
For other important SOS limitations see e.g.~\cite{MekaPW15,BarakHKKMP16}.

Second, it remains open if finding a degree $d$ SOS certificate can be performed in time $n^{O(d)}$. Indeed, as noted in the recent paper by O'Donnell~\cite{ODonnell17} and further discussed by Raghavendra and Weitz in~\cite{RaghavendraW17} it is not obviously true that the search can be done so efficiently.
Namely, even if a small degree SOS certificate exists, the polynomials in the certificate do not have necessarily small coefficients.
O'Donnell in~\cite{ODonnell17} gives an example of a polynomial optimization problem that admits a degree 2 SOS certificate, but every degree 2 SOS certificate for this problem has exponential bit complexity. Moreover, in \cite{RaghavendraW17} the example is modified and cast into a hypercube optimization problem again having a degree 2 SOS certificate, which, however, has super-polynomial bit complexity for certificates up to the degree $O(\sqrt{n})$.
For small $d$, this excludes the possibility that known optimization tools used for solving SDP problems like the ellipsoid method~\cite{KHACHIYAN1980,GroetschelLovaszSchrijver1988} are able to find a degree $d$ certificates in time $n^{O(d)}$ for optimization problems of the form~\eqref{Eq:mat_prog_formulation}.
%
%for the general hypercube optimization.
%
The above arguments motivate the search of new methods for solving hypercube optimization problems efficiently.
\medskip

In this article, we initiate an analysis of hypercube optimization problems of the form \eqref{Equ:HypercubeOptimizationProblem} via \struc{\textit{sums of nonnegative circuit polynomials (SONC)}}. SONCs are a nonnegativity certificate introduced recently by Iliman and the third author \cite{Iliman:deWolff:Circuits}, which are independent of sums of squares; see Definition \ref{Def:CircuitPolynomial} and Theorem \ref{Thm:ConeContainment} for further details. Similarly as Lasserre relaxation for SOS, a Schm\"udgen-like Positivstellensatz yields a converging hierarchy of lower bounds for polynomial optimization problems with compact constraint set; see \cite[Theorem 4.8]{Dressler:Iliman:deWolff:Positivstellensatz} and Theorem \ref{thm:SONC_Positivstellensatz}. These bounds can be computed via a convex optimization program called \struc{\textit{relative entropy programming (REP)}} \cite[Theorem 5.3]{Dressler:Iliman:deWolff:Positivstellensatz}.
Our main question in this article is:
\begin{quote}
 \textit{Can SONC certificates be an alternative for SOS methods for optimization problems over the hypercube?}
\end{quote}
We answer this question affirmatively in the sense that we prove SONC complexity bounds for \eqref{Equ:HypercubeOptimizationProblem} analogous to the SOS bounds mentioned above. More specifically, we show:
\begin{enumerate}
 \item For every polynomial which is nonnegative over an $n$-variate hypercube with constraints of degree at most $d$ there exists a SONC certificate of nonnegativity of degree at most $n+d$; see Theorem \ref{thm:f_vanishing_on_H_is_SONC_2n} and Corollary \ref{Cor:DegreeNplusDCertificate}.
 \item If a polynomial $f$ admits a degree $d$ SONC certificate of nonnegativity over an $n$-variate hypercube, then the polynomial $f$ admits also a \emph{short} degree $d$ SONC certificate that includes at most $n^{O(d)}$ nonnegative circuit polynomials; see Theorem \ref{thm:degree_d_certificate}.
  %this certificate can be obtained using the \emph{relative entropy programming} of size $n^{O(d)}$; see Theorem \ref{thm:degree_d_certificate} and Corollary \ref{Cor:DegreeDComplexityBound}. 
\end{enumerate}

For a discussion and remaining open problems, to turn these results into an efficient algorithms, see Section \ref{Subsec:Prelim:Degree_d_SONC_certificates} and the end of Section \ref{sec:degree_d_certificate}. 

\smallskip

Furthermore, we show some structural properties of SONCs:

\begin{enumerate}
	\item We give a simple, constructive example showing that the SONC cone is not closed under multiplication. Subsequently we use this construction to show that the SONC cone is neither closed under taking affine transformations of variables, see Lemma \ref{lem:sonc_not_closed_mulitilication} and Corollary \ref{cor:sonc_not_closed_transform_variables}.
	\item We address an open problem raised in~\cite{Dressler:Iliman:deWolff:Positivstellensatz} asking whether the Schm\"udgen-like Positivstellensatz for SONCs (Theorem \ref{thm:SONC_Positivstellensatz}) can be improved to an equivalent of Putinar's Positivstellensatz \cite{Putinar:Positivstellensatz}. We answer this question negatively by showing an explicit hypercube optimization example, which provably does not admit a Putinar representation for SONCs; see Theorem \ref{thm:counterexample} and the discussion afterwards.
\end{enumerate}

Our article is organized as follows: In Section \ref{Sec:Preliminaries} we introduce the necessary background about SONCs. In Section \ref{Sec:PropertiesSONCCone} we show that the SONC cone is closed neither under multiplication nor under affine transformations. In Section \ref{sec:sonc_hypercube} we provide our two main results regarding the degree bounds for SONC certificates over the hypercube. In Section \ref{Sec:PutinarPositivstellensatz} we prove the non-existence of an equivalent of Putinar's Positivstellensatz for SONCs and discuss this result.

\subsection*{Acknowledgements}
AK was supported by the Swiss National Science Foundation project PZ00P2$\_$174117
``Theory and Applications of Linear and Semidefinite Relaxations for Combinatorial Optimization Problems''.
TdW was supported by the DFG  grant WO 2206/1-1. This article was finalized while TdW was hosted by the Institut Mittag-Leffler. We thank the institute for its hospitality.

\section{Preliminaries}
\label{Sec:Preliminaries}

In this section we collect basic notions and statements on sums of nonnegative circuit polynomials (SONC).\\
% as well as on optimization over the (Boolean) hypercube.\\
Throughout the paper, we use bold letters for vectors, e.g., $\struc{\mathbf{x}}=(x_1,\ldots,x_n) \in \R^n$. Let $\struc{\N^*} = \N \setminus \{\mathbf{0}\}$ and $\struc{\R_{\geq 0}}$ ($\struc{\R_{> 0}}$) be the set of nonnegative (positive) real numbers. Furthermore let \struc{$\R[\mathbf{x}] = \R[x_1,\ldots,x_n]$} be the ring of real $n$-variate polynomials and the set of all $n$-variate polynomials of degree less than or equal to $2d$ is denoted by $\struc{\R[\mathbf{x}]_{n,2d}}$. We denote by $\struc{[n]}$ the set $\{1,\ldots,n\}$ and the sum of binomial coefficients $\sum_{k=0}^d \binom{n}{k}$ is abbreviated by $\struc{\binom{n}{\leq d}}$. 
% % % Let \struc{$\delta_{ij}$} be the $ij$-Kronecker symbol, let $\struc{\mathbf{e}_i}=(\delta_{i1},\ldots,\delta_{in})$ be the $i$-th standard vector, and let
Let $\struc{\mathbf{e_1}},\ldots,\struc{\mathbf{e_n}}$ denote the canonical basis vectors in $\R^n$. 

%\subsection{Optimization over the Hypercube}
%In this section we state and prove a simple theorem that will be useful in this paper.
%\begin{lemma}
%Optimizing over the boolean hypercube a function $f$ of degree greater than $n$ can always be reduced to an optimization problem of degree at most $n$.
%\end{lemma}
%\begin{proof}
%	
%	
%\end{proof}

\subsection{Sums of Nonnegative Circuit Polynomials}

Let $\struc{A} \subset \N^n$ be a finite set.
% % We denote by \struc{$\conv(A)$} the convex hull of $A$ and by \struc{$V(A)$} the vertices of $\conv(A)$. 
In what follows, we consider polynomials $f \in \R[\mathbf{x}]$ supported on $A$. Thus, $f$ is of the form $\struc{f(\mathbf{x})} = \sum_{\boldsymbol{\alp} \in A}^{} f_{\boldsymbol{\alp}}\mathbf{x}^{\boldsymbol{\alp}}$ with $\struc{f_{\boldsymbol{\alp}}} \in \R$ and $\struc{\mathbf{x}^{\boldsymbol{\alp}}} = x_1^{\alp_1} \cdots x_n^{\alp_n}$. A lattice point is called \struc{\textit{even}} if it is in $(2\N)^n$ and a term $ f_{\boldsymbol{\alp}}\mathbf{x}^{\boldsymbol{\alp}}$ is called a \struc{\emph{monomial square}} if $f_{\boldsymbol{\alp}} > 0$ and $\boldsymbol{\alp}$ even. We denote by $\struc{\New(f)} = \conv\{\boldsymbol{\alp} \in \N^n : f_{\boldsymbol{\alp}} \neq 0\}$ the Newton polytope of $f$. \\

Initially, we introduce the foundation of SONC polynomials, namely \textit{circuit polynomials}; see also \cite{Iliman:deWolff:Circuits}:

\begin{definition}
	A polynomial $f \in \R[\mathbf{x}]$ 
% % % 	be supported on $A \subset \N^n$ such that all elements of $V(A)$ are even. Then $f$ 
	is called a \struc{\emph{circuit polynomial}} if it is of the form
	\begin{eqnarray}
	\struc{f(\mathbf{x})} & := & \sum_{j=0}^r f_{\boldsymbol{\alp}(j)} \mathbf{x}^{\boldsymbol{\alp}(j)} + f_{\boldsymbol{\beta}} \mathbf{x}^{\boldsymbol{\beta}}, \label{Equ:CircuitPolynomial}
	\end{eqnarray}
	with $\struc{r} \leq n$, exponents $\struc{\boldsymbol{\alp}(j)}$, $\struc{\boldsymbol{\beta}} \in A$, and coefficients $\struc{f_{\boldsymbol{\alp}(j)}} \in \R_{> 0}$, $\struc{f_{\boldsymbol{\beta}}} \in \R$, such that the following conditions hold:
	
	\begin{description}
		\item[(C1)] $\New(f)$ is a simplex with even vertices $\boldsymbol{\alp}(0), \boldsymbol{\alp}(1),\ldots,\boldsymbol{\alp}(r)$.
		\item[(C2)] 
		The exponent $\boldsymbol{\beta}$ is in the strict interior of $\New(f)$. Hence, there exist unique \struc{\emph{barycentric coordinates} $\lambda_j$} relative to the vertices $\boldsymbol{\alp}(j)$ with $j=0,\ldots,r$ satisfying
		\begin{eqnarray*}
			& & \boldsymbol{\beta} \ = \ \sum_{j=0}^r \lambda_j \boldsymbol{\alp}(j) \ \text{ with } \ \lambda_j \ > \ 0 \ \text{ and } \  \sum_{j=0}^r \lambda_j \ = \ 1.
		\end{eqnarray*}
% 		in \struc{\emph{barycentric coordinates} $\lambda_j$} relative to the vertices $\boldsymbol{\alp}(j)$ with $j=0,\ldots,r$.
	\end{description}
	We call the terms $f_{\boldsymbol{\alp}(0)} \mathbf{x}^{\boldsymbol{\alp}(0)},\ldots,f_{\boldsymbol{\alp}(r)} \mathbf{x}^{\boldsymbol{\alp}(r)}$ the \struc{\emph{outer terms}} and $f_{\boldsymbol{\beta}} \mathbf{x}^{\boldsymbol{\beta}}$ the \struc{\emph{inner term}} of $f$. 
	
	For every circuit polynomial we define the corresponding \struc{\textit{circuit number}} as
	\begin{eqnarray}
	\struc{\Theta_f} \ := \ \prod_{j = 0}^r \left(\frac{f_{\boldsymbol{\alp}(j)}}{\lambda_j}\right)^{\lambda_j}. \label{Equ:DefCircuitNumber}
	\end{eqnarray}
	\label{Def:CircuitPolynomial}
	\endexa
\end{definition}

% % % By condition (C1) the set $V(A) =\{\boldsymbol{\alp}(0),\ldots,\boldsymbol{\alp}(r)\}$ forms an $r$-dimensional simplex with even vertices, which coincides with $\New(f)=\conv(A)$. Therefore, we say that $\New(f)$ is a \struc{\textit{simplex Newton polytope}}. Since the barycentric coordinates are strictly positive, we assume w.l.o.g. that $\boldsymbol{\beta} \in \Int(\New(f))$, see also \cite[Lemma 3.7]{Iliman:deWolff:Circuits}. Note that the support $A=\{\boldsymbol{\alp}(0),\ldots,\boldsymbol{\alp}(r),\boldsymbol{\beta}\} $ forms a circuit, i.e. a minimally affine dependent set; see e.g. \cite{Gelfand:Kapranov:Zelevinsky}.\\

The first fundamental statement about circuit polynomials is that its nonnegativity is determined by its circuit number $\Theta_f$ and $f_{\boldsymbol{\beta}}$ entirely:

\begin{thm}[\cite{Iliman:deWolff:Circuits}, Theorem 3.8]
	Let $f$  be a  circuit polynomial with inner term $f_{\boldsymbol{\beta}} \mathbf{x}^{\boldsymbol{\beta}}$ and let $\Theta_f$ be the corresponding circuit number, as defined in \eqref{Equ:DefCircuitNumber}. Then the following statements are equivalent:
	\begin{enumerate}
		\item $f$ is nonnegative.
		\item $|f_{\boldsymbol{\beta}}| \leq \Theta_f$ and $\boldsymbol{\beta} \not \in (2\N)^n$ \quad or \quad $f_{\boldsymbol{\beta}} \geq -\Theta_f$ and $\boldsymbol{\beta }\in (2\N)^n$.
	\end{enumerate}
	\label{Thm:CircuitPolynomialNonnegativity}
\end{thm}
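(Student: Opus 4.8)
The plan is to reduce to the classical weighted AM--GM inequality by a change of perspective. First I would observe that a circuit polynomial $f$ as in \eqref{Equ:CircuitPolynomial} is nonnegative on all of $\R^n$ if and only if it is nonnegative on the positive orthant $\R_{>0}^n$ after possibly flipping signs of variables: indeed, since the outer exponents $\boldsymbol{\alp}(j)$ are even, the outer terms $f_{\boldsymbol{\alp}(j)}\mathbf{x}^{\boldsymbol{\alp}(j)}$ only depend on $|x_1|,\dots,|x_n|$, while the inner term satisfies $|f_{\boldsymbol{\beta}}\mathbf{x}^{\boldsymbol{\beta}}| = |f_{\boldsymbol{\beta}}|\cdot |\mathbf{x}|^{\boldsymbol{\beta}}$, and its sign can be made equal to $-|f_{\boldsymbol{\beta}}|$ by a suitable choice of signs of the $x_i$ \emph{unless} $\boldsymbol{\beta}\in(2\N)^n$, in which case the inner term is always $f_{\boldsymbol{\beta}}|\mathbf{x}|^{\boldsymbol{\beta}}$. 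This dichotomy is exactly the source of the two cases in statement~(2). So it suffices to analyze, for $\mathbf{x}\in\R_{>0}^n$, the function
\[
g(\mathbf{x}) \ = \ \sum_{j=0}^r f_{\boldsymbol{\alp}(j)}\mathbf{x}^{\boldsymbol{\alp}(j)} \ - \ |f_{\boldsymbol{\beta}}|\,\mathbf{x}^{\boldsymbol{\beta}}
\]
in the non-even case (and the same with $f_{\boldsymbol{\beta}}$ in place of $-|f_{\boldsymbol{\beta}}|$ in the even case), and show $g\ge 0$ on $\R_{>0}^n$ iff $|f_{\boldsymbol{\beta}}|\le\Theta_f$ (resp. $f_{\boldsymbol{\beta}}\ge -\Theta_f$).

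Next I would invoke the weighted AM--GM inequality with weights $\lambda_0,\dots,\lambda_r$: for any positive reals $t_0,\dots,t_r$,
\[
\sum_{j=0}^r \lambda_j t_j \ \geq \ \prod_{j=0}^r t_j^{\lambda_j}.
\]
Setting $t_j = \frac{f_{\boldsymbol{\alp}(j)}}{\lambda_j}\mathbf{x}^{\boldsymbol{\alp}(j)}$ and using $\sum_j \lambda_j \boldsymbol{\alp}(j) = \boldsymbol{\beta}$ together with $\sum_j \lambda_j = 1$, the left side becomes $\sum_j f_{\boldsymbol{\alp}(j)}\mathbf{x}^{\boldsymbol{\alp}(j)}$ and the right side becomes $\prod_j (f_{\boldsymbol{\alp}(j)}/\lambda_j)^{\lambda_j}\cdot \mathbf{x}^{\sum_j \lambda_j\boldsymbol{\alp}(j)} = \Theta_f\,\mathbf{x}^{\boldsymbol{\beta}}$. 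Hence $\sum_j f_{\boldsymbol{\alp}(j)}\mathbf{x}^{\boldsymbol{\alp}(j)} \ge \Theta_f\,\mathbf{x}^{\boldsymbol{\beta}}$ for all $\mathbf{x}\in\R_{>0}^n$, which immediately gives $g(\mathbf{x})\ge (\Theta_f - |f_{\boldsymbol{\beta}}|)\mathbf{x}^{\boldsymbol{\beta}}\ge 0$ when $|f_{\boldsymbol{\beta}}|\le\Theta_f$. This proves the direction (2)$\Rightarrow$(1). For the even case one argues identically, noting $f_{\boldsymbol{\beta}}\mathbf{x}^{\boldsymbol{\beta}} \ge -|f_{\boldsymbol{\beta}}|\mathbf{x}^{\boldsymbol{\beta}}$ always.

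For the converse (1)$\Rightarrow$(2), I would show the AM--GM bound above is tight along a curve in $\R_{>0}^n$. The equality case of weighted AM--GM forces $t_0=\dots=t_r$, i.e. $\frac{f_{\boldsymbol{\alp}(j)}}{\lambda_j}\mathbf{x}^{\boldsymbol{\alp}(j)}$ is independent of $j$; since $\boldsymbol{\alp}(0),\dots,\boldsymbol{\alp}(r)$ are affinely independent (vertices of a simplex of dimension $r$), one can solve for a point $\mathbf{x}^*\in\R_{>0}^n$ realizing this — concretely, taking logarithms turns the conditions into an affine linear system in $(\log x_1,\dots,\log x_n)$ that is consistent because the $\boldsymbol{\alp}(j)$ span an $r$-dimensional affine subspace and $r\le n$. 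At such $\mathbf{x}^*$ we get $\sum_j f_{\boldsymbol{\alp}(j)}(\mathbf{x}^*)^{\boldsymbol{\alp}(j)} = \Theta_f (\mathbf{x}^*)^{\boldsymbol{\beta}}$; scaling $\mathbf{x}^*$ along the one-parameter family that preserves the equality (if the $\boldsymbol{\alp}(j)$ do not affinely span all of $\R^n$ there is extra freedom, otherwise $\mathbf{x}^*$ is essentially unique up to the scaling $\mathbf{x}\mapsto s\mathbf{x}$ corresponding to the all-ones direction), and then in the non-even case choosing signs so the inner term equals $-|f_{\boldsymbol{\beta}}|(\mathbf{x}^*)^{\boldsymbol{\beta}}$, we obtain $f$ evaluating to $(\Theta_f - |f_{\boldsymbol{\beta}}|)(\mathbf{x}^*)^{\boldsymbol{\beta}}$. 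Nonnegativity of $f$ then forces $|f_{\boldsymbol{\beta}}|\le\Theta_f$; the even case is analogous with $f_{\boldsymbol{\beta}} + \Theta_f \ge 0$. The main obstacle is this last step: one must check carefully that a common positive solution $\mathbf{x}^*$ to all the equations $\frac{f_{\boldsymbol{\alp}(j)}}{\lambda_j}\mathbf{x}^{\boldsymbol{\alp}(j)} = c$ actually exists inside $\R_{>0}^n$ (the existence hinges on (C1), the simplex condition, which guarantees affine independence of the outer exponents, and on $r\le n$), and that the sign choices needed to make the inner monomial negative are available precisely when $\boldsymbol{\beta}\notin(2\N)^n$ — a parity computation on the coordinates of $\boldsymbol{\beta}$ that uses that each $\boldsymbol{\alp}(j)$ is even.
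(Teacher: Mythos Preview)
The paper does not prove this theorem at all: it is quoted as Theorem~3.8 from \cite{Iliman:deWolff:Circuits} and used as a black box, so there is no ``paper's own proof'' to compare against. Your argument via weighted AM--GM is exactly the standard proof (and, to my knowledge, the one given in the cited source): reduce to the positive orthant using the parity of the outer exponents, apply weighted AM--GM with weights $\lambda_0,\dots,\lambda_r$ and entries $t_j = (f_{\boldsymbol{\alp}(j)}/\lambda_j)\mathbf{x}^{\boldsymbol{\alp}(j)}$ to obtain $\sum_j f_{\boldsymbol{\alp}(j)}\mathbf{x}^{\boldsymbol{\alp}(j)} \ge \Theta_f\,\mathbf{x}^{\boldsymbol{\beta}}$, and then exhibit a point of equality to get the converse.

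Two small remarks on your write-up of the converse. First, the existence of $\mathbf{x}^*\in\R_{>0}^n$ with all $t_j$ equal follows cleanly from affine independence: taking logarithms and subtracting the $j=0$ equation from the others gives the linear system $(\boldsymbol{\alp}(j)-\boldsymbol{\alp}(0))\cdot\mathbf{y} = \log\!\bigl(\lambda_j f_{\boldsymbol{\alp}(0)}/(\lambda_0 f_{\boldsymbol{\alp}(j)})\bigr)$, $j=1,\dots,r$, which is solvable since the $\boldsymbol{\alp}(j)-\boldsymbol{\alp}(0)$ are linearly independent and $r\le n$; this is what you say, but you do not need any of the subsequent discussion about one-parameter families or scaling to finish. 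Second, your aside that in the full-rank case $\mathbf{x}^*$ is ``unique up to the scaling $\mathbf{x}\mapsto s\mathbf{x}$'' is not correct --- such a scaling multiplies $t_j$ by $s^{|\boldsymbol{\alp}(j)|}$, which varies with $j$ unless all the $\boldsymbol{\alp}(j)$ have the same degree --- but this is irrelevant to the argument, since all you need is a single $\mathbf{x}^*$ at which equality holds.
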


% % % Statement (2) is equivalent to: $|f_{\boldsymbol{\beta}}| \leq \Theta_f$ or $f$ is a sum of monomial squares. 

Therefore, expressing a polynomial as a \struc{\it{sum of nonnegative circuit polynomials (SONC)}} is a certificate for the polynomials nonnegativity.

\begin{definition}
	We define for every $n,d \in \N^*$ the set of \struc{\emph{sums of nonnegative circuit polynomials} (SONC)} in $n$ variables of degree $2d$ as
	$$\struc{C_{n,2d}} \ := \ \left\{f \in \R[\mathbf{x}]_{n,2d} \ :\  f = \sum_{i=1}^k \mu_i p_i, \begin{array}{c}
	p_i \text{ is a nonnegative circuit polynomial, } \\
	\mu_i \geq 0, k \in \N^* \\
	\end{array}
	\right\}
	$$
	\label{Def:SONC}
	\endexa
\end{definition}

We denote by SONC both the set of SONC polynomials and the property of a polynomial to be a sum of nonnegative circuit polynomials.

In what follows let $\struc{P_{n,2d}}$ be the cone of nonnegative  $n$-variate polynomials of degree at most $2d$ and $\struc{\Sig_{n,2d}}$ be the corresponding cone of sums of squares respectively. 
An important observation is, that SONC polynomials form a convex cone independent of the SOS cone:

\begin{thm}[\cite{Iliman:deWolff:Circuits}, Proposition 7.2]
	$C_{n,2d}$ is a convex cone satisfying:
	\begin{enumerate}
		\item $C_{n,2d} \subseteq P_{n,2d}$ for all $n,d \in \N^*$,
		\item $C_{n,2d} \subseteq \Sigma_{n,2d}$ if and only if $(n,2d)\in\{(1,2d),(n,2),(2,4)\}$,
		\item  $\Sigma_{n,2d} \not\subseteq C_{n,2d}$ for all $(n,2d)$ with $2d \geq 6$.
	\end{enumerate}
	\label{Thm:ConeContainment}
\end{thm}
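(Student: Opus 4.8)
\textit{Parts (1) and the easy half of (2).} I would begin with the formal observations: $C_{n,2d}$ is a convex cone straight from Definition~\ref{Def:SONC}, since a nonnegative combination $\sum_i\mu_ip_i$ of nonnegative circuit polynomials of degree $\le 2d$ is again such a combination (the Newton polytope of the sum is $\conv(\bigcup_i\New(p_i))$, so no positive leading term cancels and the degree stays $\le 2d$). Part (1) is then immediate from Theorem~\ref{Thm:CircuitPolynomialNonnegativity}: nonnegative circuit polynomials are nonnegative, and nonnegativity together with the degree bound pass to nonnegative combinations, so $C_{n,2d}\subseteq P_{n,2d}$. For the ``if'' direction of (2) I would simply recall Hilbert's classical theorem that $P_{n,2d}=\Sigma_{n,2d}$ precisely when $(n,2d)\in\{(1,2d),(n,2),(2,4)\}$ (nonnegative univariate polynomials, nonnegative quadratics, and nonnegative bivariate quartics are sums of squares), whence $C_{n,2d}\subseteq P_{n,2d}=\Sigma_{n,2d}$ by (1).

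\textit{The hard half of (2).} Here $(n,2d)\notin\{(1,2d),(n,2),(2,4)\}$ forces $n\ge 2$, $2d\ge 4$ and $(n,2d)\ne(2,4)$, and the plan is to exhibit a nonnegative circuit polynomial of degree $\le 2d$ in $n$ variables that is not SOS. For $2d\ge 6$ I would use the Motzkin polynomial $M=x_1^4x_2^2+x_1^2x_2^4+1-3x_1^2x_2^2$; one checks that $\New(M)$ is the simplex on the even vertices $(4,2),(2,4),(0,0)$ with the inner exponent $(2,2)$ as barycenter, so $\Theta_M=3=|f_{(2,2)}|$ and Theorem~\ref{Thm:CircuitPolynomialNonnegativity} certifies $M\in C_{n,2d}$, while $M$ is the classical nonnegative non-SOS polynomial and stays non-SOS in $n$ variables because setting $x_3=\dots=x_n=0$ in an SOS representation gives one in $\R[x_1,x_2]$. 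For $2d=4$ (so $n\ge 3$) I would instead use the dehomogenized Choi--Lam polynomial $q=x_1^2x_2^2+x_2^2x_3^2+x_3^2x_1^2+1-4x_1x_2x_3$; the same check ($\New(q)$ a simplex with even vertices, inner exponent $(1,1,1)$ the barycenter, $\Theta_q=4=|f_{(1,1,1)}|$) gives $q\in C_{n,4}$, and $q$ is not SOS since a hypothetical SOS representation, homogenized (each square has degree $\le 2$), would express the quaternary quartic $x_1^2x_2^2+x_2^2x_3^2+x_3^2x_1^2+w^4-4x_1x_2x_3w$ as a sum of squares, a classical impossibility. Hence $C_{n,2d}\not\subseteq\Sigma_{n,2d}$ in all remaining cases.

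\textit{Part (3).} For fixed $n$ and $2d\ge 6$ I would take $f=(1-x_1)^{2d}=\bigl((1-x_1)^d\bigr)^2\in\Sigma_{n,2d}$ and assume, for contradiction, a representation $f=\sum_i\mu_ip_i$ with $\mu_i>0$ and $p_i$ nonnegative circuit polynomials. Specializing $x_2=\dots=x_n=0$ and writing $\tilde p_i(x_1)=p_i(x_1,0,\dots,0)\ge 0$ gives $(1-x_1)^{2d}=\sum_i\mu_i\tilde p_i$. Each nonzero $\tilde p_i$ is a nonnegative univariate polynomial vanishing at $x_1=1$, hence $\tilde p_i=(x_1-1)^{2m_i}\tilde q_i$ with $m_i\ge 1$ and $\tilde q_i(1)>0$; with $2m=\min_i 2m_i$, pulling out $(x_1-1)^{2m}$ leaves a factor whose value at $x_1=1$ is $\sum_{i:\,m_i=m}\mu_i\tilde q_i(1)>0$, so $\sum_i\mu_i\tilde p_i$ vanishes to order exactly $2m$ there, forcing $2m=2d$. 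Since $\deg\tilde p_i\le\deg p_i\le 2d$, every nonzero $\tilde p_i$ is therefore $c_i(x_1-1)^{2d}$ with $c_i>0$. But $(x_1-1)^{2d}$ has $2d+1>3$ monomials and a negative coefficient, whereas the restriction of a nonnegative circuit polynomial to the $x_1$-axis is either a sum of monomial squares (all coefficients positive, when its inner term carries a positive coordinate among $x_2,\dots,x_n$ and dies) or, when its Newton simplex already lies on that axis, a univariate circuit polynomial (at most three monomials); neither can equal $c_i(x_1-1)^{2d}$. This contradiction shows $f\notin C_{n,2d}$, i.e.\ $\Sigma_{n,2d}\not\subseteq C_{n,2d}$.

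\textit{Main obstacle.} Parts (1) and the ``if'' half of (2) are bookkeeping. I expect the real work to be on the non-containments: in (2) one leans on the classical non-SOS-ness of the Motzkin and Choi--Lam polynomials (and must check that these survive the passage to more variables and that they satisfy (C1), (C2) with $|f_{\boldsymbol\beta}|=\Theta_f$); in (3) the crux is the multiplicity argument pinning down the possible summands of $(1-x_1)^{2d}$, which hinges on the elementary but essential fact that a univariate circuit polynomial has at most three terms.
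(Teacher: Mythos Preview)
The paper does not prove this theorem at all; it is quoted verbatim from \cite{Iliman:deWolff:Circuits}, Proposition 7.2, and no argument is supplied here. So there is no ``paper's own proof'' to compare against, and the question reduces to whether your argument is correct. It is.

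Your treatment of (1) and of the ``if'' half of (2) via Hilbert's 1888 theorem is standard and matches the usual reasoning. For the ``only if'' half of (2), your two witnesses are exactly the ones used in \cite{Iliman:deWolff:Circuits}: the Motzkin form for $2d\ge 6$ and the (dehomogenised) Choi--Lam form for $(n,2d)=(n,4)$ with $n\ge 3$; your verification that each is a nonnegative circuit polynomial with $|f_{\boldsymbol\beta}|=\Theta_f$ is correct, and the reduction of non-SOS-ness to the classical two-- respectively four--variable statements via specialisation/homogenisation is sound. One small point worth making explicit in case~(b) of your dichotomy in (3): if $\boldsymbol\beta$ lies on the $x_1$-axis then, since $\boldsymbol\beta=\sum_j\lambda_j\boldsymbol\alpha(j)$ with all $\lambda_j>0$ and all $\boldsymbol\alpha(j)\in\N^n$, every $\boldsymbol\alpha(j)$ must lie on the $x_1$-axis too; this is what forces $r\le 1$ and hence at most three monomials. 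With that observation spelled out, the order-of-vanishing argument for (3) is airtight. Your example $(1-x_1)^{2d}$ and the univariate reduction give a clean, self-contained proof of (3); the original reference argues the same non-containment but with a different (support-based) obstruction, so your route is a legitimate and arguably more elementary alternative.
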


For further details about the SONC cone see \cite{deWolff:Circuits:OWR,Iliman:deWolff:Circuits, Dressler:Iliman:deWolff:Positivstellensatz}.
 
\subsection{SONC Certificates over a Constrained Set}
In \cite[Theorem 4.8]{Dressler:Iliman:deWolff:Positivstellensatz}, Iliman, the first, and the third author showed that for an arbitrary real polynomial which is strictly positive on a compact, basic closed semialgebraic set $K$ there exists a SONC certificate of nonnegativity. Hereinafter we recall this result.

\medskip

We assume that $K$ is given by polynomial inequalities $\struc{g_i(\mathbf{x})} \geq 0$ for $i = 1,\ldots,s$ and is compact. For technical reason we add $2n$ redundant box constraints $\struc{l_j(\mathbf{x})} := N\pm x_j\geq 0$ for some sufficiently large $N \in \N$, which always exists due to our assumption of compactness of $K$; see \cite{Dressler:Iliman:deWolff:Positivstellensatz} for further details. Hence, we have

\begin{align}
\struc{K}\ :=\ \{\mathbf{x}\in \R^n: g_i(\mathbf{x})\geq 0 \text{ for } i\in [s] \text{ and } l_j(\mathbf{x})\geq 0 \text{ for } j\in [2n]\}.
\label{Equ:ConstraintKbox}
\end{align}
% % % with arbitrary polynomials $g_i$ and $l_j$ are the polynomials defining the box constraints $N\pm x_i\geq 0$ with $N\in\N$. 
In what follows we consider polynomials $H^{(q)}(\mathbf{x})$ defined as products of at most $q \in \N^*$ of the polynomials $g_i,l_j$ and $1$, i.e.,
\begin{align}
\struc{H^{(q)}(\mathbf{x})} \ := \ \prod_{k=1}^{q} h_k(\mathbf{x}),
\label{Equ:ProductConstraint}
\end{align}
where $\struc{h_k}\in \{1,g_1,\ldots,g_s,l_1,\ldots,l_{2n}\}$. 
Now we can state:
\begin{thm}
Let $f,g_1,\ldots,g_s\in \R[\mathbf{x}]$ be real polynomials and $K$ be a compact, basic closed semialgebraic set as in \eqref{Equ:ConstraintKbox}. If $f > 0 $ on $K$ then there exist $d,q \in \N^*$ such that we have an explicit representation of $f$ of the following form:
\[
f(\mathbf{x}) \ = \ \sum_{\text{\rm finite }} s(\mathbf{x})H^{(q)}(\mathbf{x}), 
\]
where the $s(\mathbf{x})$ are contained in $C_{n,2d}$ and every $H^{(q)}(\mathbf{x})$ is a product as in \eqref{Equ:ProductConstraint}.
\label{thm:SONC_Positivstellensatz}
\end{thm}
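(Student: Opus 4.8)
The plan is to combine a classical sum-of-squares Positivstellensatz for compact sets with a ``SONC-ification'' step that trades sums of squares for sums of nonnegative circuit polynomials, using the box constraints as currency. To begin, since $K$ is compact and $f > 0$ on $K$, the number $\varepsilon := \min_K f$ is strictly positive; moreover the box constraints $l_j = N \pm x_j$ in \eqref{Equ:ConstraintKbox} force $|x_j| \leq N$ on $K$, so that, writing $l_j^{\pm} := N \pm x_j$, one has $l_j^{+} l_j^{-} = N^2 - x_j^2 \geq 0$ on $K$.

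For the classical input I would apply Schm\"udgen's Positivstellensatz to the description \eqref{Equ:ConstraintKbox}: as $K$ is compact and $f > 0$ on $K$, this produces a finite representation $f = \sum_t \sigma_t H_t$ in which each $H_t$ is a square-free product of the $g_i$ and the $l_j$ -- hence of the form \eqref{Equ:ProductConstraint} after padding with factors $1$, and with $q \leq s + 2n$ -- and each $\sigma_t$ is a sum of squares. What remains is to replace the sum-of-squares multipliers $\sigma_t$ by SONC multipliers.

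The mechanism is the flexibility of the circuit number \eqref{Equ:DefCircuitNumber}: by Theorem~\ref{Thm:CircuitPolynomialNonnegativity}, a circuit polynomial with a prescribed inner term is nonnegative as soon as its outer coefficients are large enough, and arbitrarily many monomial squares may be split into positive portions and used as those outer terms. Concretely, for any polynomial $g$ and any $N$ with $2N > \deg g$ there is $\lambda > 0$ with $g + \lambda\bigl(1 + \sum_{i=1}^{n} x_i^{2N}\bigr) \in C_{n,2N}$: each monomial of $g$ lies in a face of the simplex $\conv\{\mathbf 0, 2N\mathbf{e_1},\dots,2N\mathbf{e_n}\}$, so for each such monomial one forms a circuit polynomial having it as inner term and suitable positive portions of $\lambda$ and of the $\lambda x_i^{2N}$ as outer terms; these are nonnegative for $\lambda$ large by Theorem~\ref{Thm:CircuitPolynomialNonnegativity} and sum to $g + \lambda(1+\sum_i x_i^{2N})$. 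To discharge the added squares one uses the polynomial identity
\[
N^{2N} - x_j^{2N} \ = \ (N+x_j)(N-x_j)\sum_{k=0}^{N-1} N^{2N-2-2k} x_j^{2k} \ = \ l_j^{+} l_j^{-}\cdot \rho_j(\mathbf x),
\]
with $\rho_j$ a sum of monomial squares, so $\rho_j \in C_{n,2N-2}$; thus $\lambda\bigl(1 + \sum_i x_i^{2N}\bigr)$ equals the constant $\lambda(1 + nN^{2N})$ minus the nonnegative combination $\lambda\sum_i \rho_i\,l_i^{+} l_i^{-}$, whose summands are already of the desired shape $s \cdot H^{(q)}$.

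The main obstacle, and what I expect to be the technical heart, is to arrange these moves so that the resulting identity for $f$ is \emph{genuinely} of the form $\sum_{\text{finite}} s\cdot H^{(q)}$ with $s \in C_{n,2d}$ and with no leftover term of the wrong sign. This is delicate for two reasons. First, $C_{n,2d}$ is not closed under multiplication (Lemma~\ref{lem:sonc_not_closed_mulitilication}), so one cannot simply substitute a SONC factor for an SOS factor inside a product $\sigma_t H_t$; one must SONC-ify and then re-expand, controlling that this does not cascade. Second, the recycling identity converts the extra monomial squares into a \emph{positive} constant times $H_t$, and eliminating this constant is exactly the step that must consume the strict inequality $f \geq \varepsilon > 0$ rather than mere nonnegativity on $K$. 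I would attempt to handle both by SONC-ifying globally -- applied to $f$ after first subtracting the contributions already in target form and splitting a positive constant $\le \varepsilon$ (itself a monomial square, hence an element of $C_{n,0}$) off $f$ -- and then checking that the procedure terminates after finitely many rounds, taking $d$ to be the largest power $N$ needed to dominate the degrees that occur and $q$ the largest length of an $H_t$ increased by the two box factors $l_i^{\pm}$; verifying this termination, and that the positive constant really can be absorbed, is where I expect the real work to be.
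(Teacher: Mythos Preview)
This theorem is not proved in the present paper at all: it is quoted in the preliminaries as \cite[Theorem~4.8]{Dressler:Iliman:deWolff:Positivstellensatz} and used as a black box. So there is no ``paper's own proof'' to compare against; the right reference point is the argument in that earlier article.

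That said, your sketch assembles precisely the ingredients used there: Schm\"udgen's Positivstellensatz on the compact set \eqref{Equ:ConstraintKbox} as a starting representation, the lemma that $g+\lambda\bigl(1+\sum_i x_i^{2N}\bigr)\in C_{n,2N}$ for $\lambda$ large, and the factorization $N^{2N}-x_j^{2N}=l_j^{+}l_j^{-}\rho_j$ with $\rho_j$ a sum of monomial squares, which lets one trade the auxiliary monomial squares for products of box constraints. These are exactly the right moves.

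Where your proposal is incomplete is the point you yourself flag. After one pass you obtain
\[
f \ = \ \sum_t s_t H_t \ + \ \sum_{t,i}\lambda_t\rho_i\,l_i^{+}l_i^{-}H_t \ - \ \sum_t c_t H_t,
\]
and the last block carries the wrong sign. Iterating the same trick on $-\sum_t c_t H_t$ only pushes the defect into a negative constant, and your hope that the strict positivity $f\ge\varepsilon$ absorbs it does not go through as stated: the constant $C$ produced by the procedure depends on the Schm\"udgen data, not on $\varepsilon$, so there is no reason to have $C\le\varepsilon$. Nor does a limiting argument work directly, since as one shrinks the defect the degrees in the Handelman/Schm\"udgen step may blow up. This is a genuine gap, not a routine detail, and it is precisely the place where the proof in \cite{Dressler:Iliman:deWolff:Positivstellensatz} does the real work; you should consult that reference for the mechanism that closes it rather than leave it as ``I expect the real work to be here''.
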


The central object of interest is the smallest value of $d$ and $q$ that allows $f$ a decomposition as in Theorem~\ref{thm:SONC_Positivstellensatz}. This motivates the following definition of a \emph{degree $d$ SONC certificate}.

\begin{definition}
	\label{Def:degree_d_certificate}
	Let $f \in \R[\mathbf{x}]$ such that $f$ is positive on the set $K$ given in \eqref{Equ:ConstraintKbox}. Then $f$ has a \struc{\emph{degree $d$ SONC certificate}} if it admits for some $q \in \N^*$ the following decomposition: 
	$$
	f(\mathbf{x}) \ = \ \sum_{\text{\rm finite }} s(\mathbf{x})H^{(q)}(\mathbf{x}), 
	$$
	where the $s(\mathbf{x})$ are SONCs, the $H^{(q)}(\mathbf{x})$ are products as in \eqref{Equ:ProductConstraint}, and 
	$$\deg\left(\sum_{\text{\rm finite }} s(\mathbf{x})H^{(q)}(\mathbf{x})\right) \ \leq \ d.$$
	\endexa
\end{definition}

\subsection{The Complexity of Finding a Degree $d$ SONC Certificate}
\label{Subsec:Prelim:Degree_d_SONC_certificates}
One can decide nonnegativity for a single given circuit polynomial by solving a system of linear equations. This is due to Theorem \ref{Thm:CircuitPolynomialNonnegativity} and the fact that the $\lam_j$ are unique (and thus trivially nonnegative), since the $\boldsymbol{\alp}(j)$ are affinely independent by construction.

For finding a degree $d$ SONC certificate, there are two main bottlenecks that might effect its complexity. The first one is to guarantee the existence of a sufficiently short degree $d$ SONC certificate. If the first bottleneck is resolved, then a second one might occur: even if the existence of a short degree $d$ SONC certificate is guaranteed, then it is not clear a priori, whether one can search through the space of $n$-variate circuit polynomials of degree at most $d$ efficiently, in order to find such a short certificate.

%The first one is testing if a single polynomial is a nonnegative circuit polynomial. The second one is the \emph{'finite'} number of components that form a degree $d$ SONC certificate. 
\medskip

Regarding the first bottleneck, given the fact that a polynomial $f$ admits a degree $d$ SONC certificate, it is open whether there also exists a degree $d$ certificate which consists of a bounded (ideally $n^{O(d)}$) number of components. 

The answer to the equivalent question for the SOS degree $d$ certificates follows from the fact that a polynomial is SOS if and only if the corresponding matrix of coefficients of size $n^{O(d)}$, called the \textit{\struc{Gram matrix}}, is positive semidefinite. Since every real, symmetric matrix $M$ that is positive semidefinite admits a decomposition $M=VV^\top$, this yields an explicit SOS certificate including at most $n^{O(d)}$ polynomials squared. For more details we refer the reader to the excellent lecture notes in~\cite{BarakS16}.  

In this paper we resolve the first bottleneck regarding the existence of short SONC certificates affirmatively. Namely, we show that one can always restrict oneself to SONC certificates including at most $n^{O(d)}$ nonnegative circuit polynomials, see Section~\ref{sec:degree_d_certificate} for further details.

\section{Properties of the SONC cone}
\label{Sec:PropertiesSONCCone}

In this section we show that the SONC cone is neither closed under multiplication nor under affine transformations. First, we give a constructive proof for the fact that the SONC cone is not closed under multiplication, which is simpler than the initial proof of this fact in \cite[Lemma 4.1]{Dressler:Iliman:deWolff:Positivstellensatz}. Second, we use our construction to show that the SONC cone is not closed under affine transformation of variables.

%In this section we show that the SONC cone is neither closed under multiplication nor under affine transformations. In \cite[Lemma 4.1]{Dressler:Iliman:deWolff:Positivstellensatz} the authors proved that the SONC cone is not closed under multiplication by showing that certain products of $n$-variate circuit polynomials need to have $2^{n+1}$ many roots, while $n$-variate SONCs have at most $2^n$ many roots due to \cite[Cor. 3.9]{Iliman:deWolff:Circuits}. First, we give a proof for the fact that the SONC cone is not closed under multiplication, which is easier than the one in \cite{Dressler:Iliman:deWolff:Positivstellensatz}.
%\adam{One of the versions to be deleted}.

\begin{lemma}
	\label{lem:sonc_not_closed_mulitilication}
For every $d \geq 2$, $n \in \mathbb{N^*}$ the SONC cone $C_{n,d}$ is not closed under multiplication in the following sense: if $p_1,p_2 \in C_{n,d}$, then $p_1 \cdot p_2 \not\in C_{n,2d}$ in general.	
\end{lemma}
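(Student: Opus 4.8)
The plan is to exhibit an explicit pair of nonnegative circuit polynomials whose product fails to be SONC, and to verify both facts by hand. The natural candidates are univariate (or effectively univariate, with dummy variables padded in to reach $n$ variables): take the classical agiligic example, a single circuit polynomial of the form $p(x) = x^{2k} - 2c\, x^{2k-1}\cdots$ — more concretely I would look at $p_1 = p_2 = x^2 - 2x + 1 = (x-1)^2$ won't do since that is already a square, so instead I would take something like $p = 1 - 2x + x^2$ is a square; the cleaner route is to pick $p_1 = x^4 - 3x^2 y^2 + \cdots$. Actually the simplest honest choice: let $f(x) = x^4 + x^2 + 1 - 3\cdot$ (something on the circuit $\{0, (4), (2)\}$-pattern), i.e. take a circuit polynomial $p$ with support $\{\mathbf 0, 2\mathbf e_1, 4\mathbf e_1\}$ — wait, those are collinear so the Newton polytope is a segment, not a simplex. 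So I must use a genuinely $\geq 2$-dimensional simplex. The cleanest candidate is the Motzkin-type circuit in two variables: $p(x,y) = x^4y^2 + x^2y^4 + 1 - 3x^2y^2$, which is a nonnegative circuit polynomial (its circuit number is $\Theta_p = 3$ and $|{-3}| \le 3$, so by Theorem~\ref{Thm:CircuitPolynomialNonnegativity} it is nonnegative, sitting right on the boundary of the SONC cone). Then I would compute $p^2$ and show $p^2 \notin C_{n,4}$ — no wait, $\deg p = 6$, so I'd need a lower-degree example to land in $C_{n,d}$ with small $d$; the lemma allows any $d \ge 2$, so a degree-$2$ example would be ideal.

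So here is the concrete plan I would follow. First, construct two nonnegative circuit polynomials $p_1, p_2 \in C_{n,d}$ lying on the \emph{boundary} of the SONC cone (equality $|f_{\boldsymbol\beta}| = \Theta_f$ in Theorem~\ref{Thm:CircuitPolynomialNonnegativity}); the boundary is the key, because for a polynomial in the interior of $C_{n,2d}$ there is slack, and one expects products of boundary points to leave the cone. Second, form the product $q = p_1 p_2 \in P_{n,2d}$, which is automatically nonnegative but whose support has grown. Third, show $q \notin C_{n,2d}$. For the third step the mechanism is: a SONC decomposition $q = \sum \mu_i p_i$ with each $p_i$ a nonnegative circuit polynomial forces, at every monomial-square position $\boldsymbol\alpha$ appearing with coefficient $q_{\boldsymbol\alpha}$, a budget constraint — the circuit numbers of the $p_i$ using $\boldsymbol\alpha$ as an outer term must be ``paid for'' out of $q_{\boldsymbol\alpha}$, and the negative (inner) terms of $q$ must be covered; a counting/convexity argument (e.g. via the weighted AM–GM inequality underlying circuit numbers, or by evaluating at a well-chosen point or exploiting that $q$ vanishes at a point where no single circuit summand can vanish) yields the contradiction. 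Concretely I would either (a) find a point $\mathbf v$ with $q(\mathbf v) = 0$ but such that the zero set of any nonnegative circuit polynomial supported on $\New(q)$ cannot contain $\mathbf v$ while matching the required coefficients, or (b) use the fact (from \cite{Iliman:deWolff:Circuits}) that the zeros of nonnegative circuit polynomials have a rigid multiplicative structure, and show $q$ has ``too many'' or ``wrongly located'' zeros.

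The main obstacle I anticipate is step three: ruling out \emph{all} SONC decompositions of $q$, not just the obvious one. Unlike SOS, where membership is a single SDP feasibility question, SONC membership is a combinatorial condition over all ways of covering $\New(q)$ by lattice simplices with even vertices, and one must argue no combination of circuit numbers suffices. I would handle this by exploiting zeros: if $q = \sum_i \mu_i p_i$ and $q(\mathbf v) = 0$ with $\mathbf v$ having all coordinates nonzero, then each $\mu_i p_i(\mathbf v) = 0$, so each $p_i$ in the decomposition is a \emph{boundary} circuit polynomial vanishing at $\mathbf v$; the sign pattern and the location of $\mathbf v$ then constrain which inner terms are available, and a dimension/support count shows the negative terms of $q$ cannot all be produced. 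Getting a product example where $q$ has a conveniently placed real zero is exactly why I start from boundary circuit polynomials $p_1, p_2$: their common zero (if $p_1, p_2$ share one, e.g. $p_1 = p_2$) is automatically a zero of $q$. Finally, once $q = p^2$ for a boundary circuit polynomial $p$ is shown not to be SONC, I pad with extra variables $x_{n'+1}, \dots, x_n$ raised to even powers (adding, say, $\sum_{j>n'} x_j^d$ to each $p_i$, which keeps them nonnegative circuit polynomials or sums thereof) to obtain the statement for all $n \in \mathbb N^*$ and, by choosing the exponents, for all $d \ge 2$.
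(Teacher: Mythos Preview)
Your proposal is a sketch of several alternative strategies rather than a proof, and in the course of casting about you discard precisely the example the paper uses. You reject $p_1=p_2=(x-1)^2$ on the grounds that ``that is already a square,'' but being SOS is irrelevant here: $(1-x_j)^2 = 1 - 2x_j + x_j^2$ is a bona fide nonnegative circuit polynomial on the $1$-simplex $\{0,2e_j\}$ with inner exponent $e_j$, circuit number $\Theta = 2 = |{-2}|$, sitting exactly on the boundary of the SONC cone. (Relatedly, your objection that the support $\{0,2e_1,4e_1\}$ is ``collinear so the Newton polytope is a segment, not a simplex'' misreads Definition~\ref{Def:CircuitPolynomial}: a segment \emph{is} a $1$-simplex, and only the outer exponents $\boldsymbol\alpha(0),\dots,\boldsymbol\alpha(r)$ are required to be vertices; the inner exponent $\boldsymbol\beta$ lies in the interior.) The paper simply takes $p_1=(1-x_1)^2$, $p_2=(1-x_2)^2$ and shows $r=p_1p_2\notin C_{2,4}$.

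The paper's argument that $r\notin C_{2,4}$ is \emph{not} the zero-locus approach you outline but a short coefficient-budget count on the support: the term $-2x_1$ sits on the edge $\conv\{0,2e_1\}$ of $\New(r)$, so in any SONC decomposition it can only arise as the inner term of a circuit with outer exponents $0$ and $2e_1$; matching $|{-2}|\le\Theta$ forces that circuit to consume the entire constant coefficient $1$. By symmetry $-2x_2$ also demands the full constant term, and the budget is overdrawn. Your alternative plan---use that every summand in a SONC decomposition must vanish wherever $r$ does, then classify the boundary circuit polynomials with that zero set---is a legitimate strategy and could be pushed through for this same $r$ (which vanishes on $\{x_1=1\}\cup\{x_2=1\}$), but you have not actually executed it, and the classification step is where all the content lies. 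As written, the proposal contains no verified counterexample.
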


\begin{proof}
	For every $d=2n$, $n \in \mathbb{N^*}$ we construct two SONC polynomials $p_1$, $p_2 \in C_{n,d}$ such that the product $p_1 p_2$ is an $n$ variate, degree $2d$ polynomial that is not inside $C_{n,2d}$.
	
	Let $n = 2$. We construct the following two polynomials $p_1,~p_2 \in\mathbb{R}[x_1,x_2]$:
	$$
	\struc{p_1(x_1,x_2)}\ :=\ (1-x_1)^2, \qquad \struc{p_2(x_1,x_2)}\ :=\ (1-x_{2})^2.
	$$
	First, observe that $p_1, p_2$ are nonnegative circuit polynomials, since, in both cases, $\lambda_1=\lambda_2=1/2$, $f_{\boldsymbol{\alp}(1)}=f_{\boldsymbol{\alp}(2)}=1$, and $f_{\boldsymbol{\beta}}=-2$, thus $2=\Theta_f \geq |f_{\boldsymbol{\beta}}|$.
	
	Now consider the polynomial $r(x_1,x_2)= p_1 p_2 = \left( (1-x_1) (1-x_{2})  \right)^2$. We show that this polynomial, even though it is nonnegative, is not a SONC polynomial. Note that $r(x_1,x_2)=  1-2x_1-2x_2+4x_1x_2 +x_{1}^2  +x_2^2 -2x_1^2x_2 -2x_1x_2^2 +x_1^2x_2^2$; the support of $r$ is shown in Figure~\ref{pic:pic1}.
	\begin{figure}
		\includegraphics[width=6cm]{./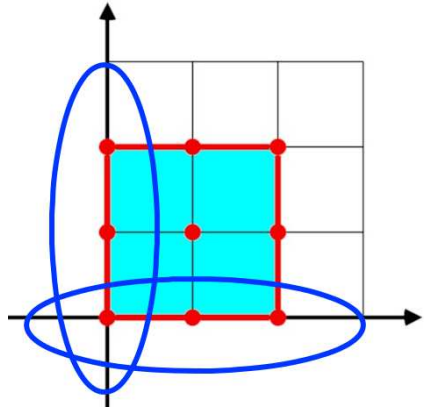}
		\caption{The Newton polytope and the support set of $r(x_1,x_k)$ with the supports of $p_1$ and $p_2$ in blue ovals.}
		\label{pic:pic1}
	\end{figure}
	Assume that $r \in C_{2,4}$, i.e., $r$ has a SONC decomposition. This implies that the term $-2x_1$ has to be an inner term of some nonnegative circuit polynomial $r_1$ in this representation. Such a circuit polynomial necessarily has the terms $1$ and $x_1^2$ as outer terms, that is, 
	$$
	r_1(x_1)\ = \ p_1(x_1,x_2)  \ = \ 1+x_1^2-2x_1
	$$
	Since $\Theta_{r_1}=2$ the polynomial $r_1$ is indeed nonnegative and, in addition, we cannot choose a smaller constant term to construct $r_1$.
	Next, also the term $-2x_2$ has to be an inner term of some nonnegative circuit polynomial $r_2$. Since this term again is on the boundary of $\New(r)$ the only option for such an $r_2$ is: $r_2(x_2)= p_2(x_1,x_2) = 1+x_2^2-2x_2$. However, the term $1$ has been already used in the above polynomial $r_1$, which leads to a contradiction, i.e., $r\notin C_{2,4}$. Since $C_{n,2d} \subseteq C_{n+1,2d}$, the general statement follows.
\end{proof}

Hereinafter we show another operation, which behaves differently for SONC than it does for SOS: Similarly as in the case of multiplications, affine transformations also do not preserve the SONC structure. This observation is important for possible degree bounds on SONC certificates, when considering optimization problems over distinct descriptions of the hypercube.
\begin{cor}
	\label{cor:sonc_not_closed_transform_variables}
	For every $d \geq 4$, $n \in \mathbb{N^*}$ the SONC cone $C_{n,d}$ is not closed under affine transformation of variables.	
\end{cor}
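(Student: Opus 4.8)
The plan is to exhibit the very polynomial $r(x_1,x_2)=\bigl((1-x_1)(1-x_2)\bigr)^2$ from the proof of Lemma~\ref{lem:sonc_not_closed_mulitilication} as the image, under an invertible affine change of variables, of a single nonnegative circuit polynomial. Since $r\notin C_{2,4}$ was already established there, this immediately produces a SONC polynomial whose affine reparametrisation is not SONC.

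Concretely, I would take the circuit polynomial $q(y_1,y_2):=(y_1^2-y_2^2)^2=y_1^4-2y_1^2y_2^2+y_2^4$. Its support is $\{(4,0),(2,2),(0,4)\}$, so $\New(q)=\conv\{(4,0),(0,4)\}$ is a $1$-simplex with even vertices, the inner term $-2y_1^2y_2^2$ lies at its midpoint $(2,2)\in(2\N)^2$ with barycentric coordinates $\lambda_0=\lambda_1=\tfrac12$, and the circuit number is $\Theta_q=(1/\tfrac12)^{1/2}(1/\tfrac12)^{1/2}=2$; since the inner exponent is even and $f_{\boldsymbol\beta}=-2\ge-\Theta_q$, Theorem~\ref{Thm:CircuitPolynomialNonnegativity} gives $q\ge0$, hence $q\in C_{2,4}$. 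The key observation is the factorisation $q(y_1,y_2)=\bigl((y_1-y_2)(y_1+y_2)\bigr)^2$: choosing the affine map $\varphi\colon\R^2\to\R^2$ with $\varphi_1(x)-\varphi_2(x)=1-x_1$ and $\varphi_1(x)+\varphi_2(x)=1-x_2$, i.e.\ $\varphi(x_1,x_2)=\bigl(1-\tfrac{x_1+x_2}{2},\ \tfrac{x_1-x_2}{2}\bigr)$, which is invertible (linear part of determinant $\tfrac12$), one gets
\[
(q\circ\varphi)(x_1,x_2)\;=\;\bigl((\varphi_1-\varphi_2)(\varphi_1+\varphi_2)\bigr)^2\;=\;\bigl((1-x_1)(1-x_2)\bigr)^2\;=\;r(x_1,x_2).
\]
By Lemma~\ref{lem:sonc_not_closed_mulitilication} we have $r\notin C_{2,4}$ while $q\in C_{2,4}$, which settles the case $(n,d)=(2,4)$.

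To reach arbitrary $n\ge2$ and even $d\ge4$ I would use $C_{2,4}\subseteq C_{n,d}$ (so $q\in C_{n,d}$) and extend $\varphi$ to $\R^n$ by fixing $x_3,\dots,x_n$; this remains an invertible affine transformation, and applying it to $q$ yields $r$ viewed as an $n$-variable polynomial of degree $4$. It remains to check $r\notin C_{n,d}$: in any hypothetical SONC decomposition $r=\sum_i\mu_ip_i$, every vertex of $\conv\bigcup_i\New(p_i)$ is a vertex — hence an outer term — of some summand and therefore carries a strictly positive coefficient that no other summand can cancel (an extreme point of the combined polytope cannot be an inner term of any $p_j$), so that vertex lies in $\New(r)$; thus $\conv\bigcup_i\New(p_i)=\New(r)\subseteq\R^2\times\{\mathbf 0\}$, forcing every $p_i\in\R[x_1,x_2]$ with $\deg p_i\le4$, and we are back to a SONC decomposition in $C_{2,4}$, contradicting Lemma~\ref{lem:sonc_not_closed_mulitilication}. (For $n=1$ one has $C_{1,d}=\Sigma_{1,d}$, the univariate nonnegative polynomials, which is trivially stable under affine reparametrisation, so the statement is to be read for $n\ge2$.)

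I do not expect a serious obstacle here; the whole content is the algebraic observation that $\bigl((1-x_1)(1-x_2)\bigr)^2$, already known to lie outside $C_{2,4}$, can be written as a squared difference of squares $\bigl(\varphi_1^2-\varphi_2^2\bigr)^2$ with $\varphi_1\mp\varphi_2\in\{1-x_1,\,1-x_2\}$, together with the fact that $y_1^4-2y_1^2y_2^2+y_2^4$ is a bona fide nonnegative circuit polynomial. The only place that needs any care is the Newton-polytope bookkeeping used to propagate the failure from $C_{2,4}$ to $C_{n,d}$ — specifically, justifying that $\New(p_i)\subseteq\New(r)$ in a SONC decomposition — and that is exactly the kind of argument already carried out in the proof of Lemma~\ref{lem:sonc_not_closed_mulitilication}.
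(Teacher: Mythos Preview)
Your argument is correct and follows the same overall strategy as the paper: exhibit the polynomial $r=\bigl((1-x_1)(1-x_2)\bigr)^2$ from Lemma~\ref{lem:sonc_not_closed_mulitilication} as the image, under an invertible affine change of variables, of a nonnegative circuit polynomial. The paper, however, chooses a simpler preimage: the monomial square $f(x_1,x_2)=x_1^2x_2^2$, which is a nonnegative circuit polynomial by inspection, together with the transformation $x_i\mapsto 1-x_i$. This avoids the circuit-number verification you carry out for $q(y_1,y_2)=(y_1^2-y_2^2)^2$ and the oblique change of coordinates $\varphi$. Your Newton-polytope argument for propagating $r\notin C_{2,4}$ to $r\notin C_{n,d}$ is more explicit than what the paper writes (which simply cites Lemma~\ref{lem:sonc_not_closed_mulitilication}), and your remark that the case $n=1$ must be excluded is a valid observation about the statement as phrased.
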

\begin{proof}
	Consider the polynomial $f(x_1,x_2) = x_1^2x_2^2$. Clearly, the polynomial $f$ is a nonnegative circuit polynomial since it is a monomial square, hence $f \in C_{n,d}$. Now consider the following affine transformation of the variables $x_1$ and $x_{2}$:
	$$
	x_1 \rightarrow 1-x_1, \qquad	x_{2} \rightarrow 1-x_{2}.
	$$
	After applying the transformation the polynomial $f$ equals the polynomial $p_1p_2$ from the proof of Lemma~\ref{lem:sonc_not_closed_mulitilication} and thus is not inside $C_{n,d}$.
	
\end{proof}

% % % \newpage

\section{An Upper Bound on the Degree of SONC Certificates over the Hypercube}
\label{sec:sonc_hypercube}
 In the previous section we showed that the SONC cone is not closed under taking an affine transformation of variables, Corollary~\ref{cor:sonc_not_closed_transform_variables}. Thus, if a polynomial $f$ admits a degree $d$ SONC certificate proving that it is nonnegative on a given compact semialgebraic set $K$, then it is a priori not clear whether a polynomial $g$, obtained from $f$ via an affine transformation of variables, admits a degree $d$ SONC certificate of nonnegativity on $K$, too. The degree needed to prove nonnegativity of $g$ might be much larger than $d$ according to the argumentation in the proof of Corollary~\ref{cor:sonc_not_closed_transform_variables}.
 
% % %  This means that the fact a polynomial $f$ positive on some compact semialgebraic set has a degree $d$ SONC proof does not imply that a polynomial $f'$ obtained from $f$ by taking affine transformation of variables has degree $d$ SONC certificate of its positivity over this set. Actually, the degree needed to prove positivity of polynomial $f'$ over that set might be much larger than $d$ (see the details of the proof in Corollary~\ref{cor:sonc_not_closed_transform_variables}). 
 
 In this section we prove that every $n$-variate polynomial which is nonnegative over the boolean hypercube has a degree $n$ SONC certificate. Moreover, if the hypercube is additionally constrained with some polynomials of degree at most $d$, then the nonnegative polynomial over such a set has degree $n+d$ SONC certificate. We show this fact for all hypercubes $\{a_i,b_i\}^n$; see Theorem \ref{thm:upper_bound_SONC_hypercube} for further details.
 
 Formally, we consider the following setting: We investigate real multivariate polynomials in $\R[\mathbf{x}]$. For $j \in [n]$, and $a_j, b_j \in \mathbb{R}$, such that $a_j < b_j$ let 
 $$\struc{g_j(\mathbf{x})} \ := \ (x_j-a_j)(x_j-b_j)$$
 be a \struc{\textit{quadratic polynomial with two distinct real roots}}. Let $\struc{\cH} \subset \R^n$ denote the \struc{$n$-\textit{dimensional hypercube}} given by \struc{$\prod_{j=1}^n\{a_j,  b_j\}$}. 
% % %  Now let us define the constrained hypercube $\cH_\mathcal{P}$. 
 Moreover, let 
 $$\struc{\mathcal{P}} \ := \ \{p_1,\ldots,p_m: ~p_i \in \R[\mathbf{x}], ~ i \in [m]\}$$ 
 be a set of polynomials, which we consider as constraints $\struc{p_i(\mathbf{x})} \geq 0$ with $\deg(p_i(\mathbf{x})) \leq d$ for all $i \in [n]$ as follows. We define
 $$\struc{\cH_\mathcal{P}} \ := \ \{\mathbf{x} \in \R^n: ~ g_j(\mathbf{x}) =0,~ j\in [n],~ p(\mathbf{x}) \geq 0,~ p \in \mathcal{P} \}$$
 as the \struc{$n$-\textit{dimensional hypercube} $\cH$ \textit{constrained by polynomial inequalities given by} $\mathcal{P}$}.
 
 Throughout the paper we assume that $|\cP|=\poly(n)$, i.e. the size of the constraint set $\cP$ is polynomial in $n$. This is usually the case, since otherwise the problem gets less tractable from the optimization point of view.

 \medskip
 
 As a first step, we introduce a \textit{Kronecker function}:
 \begin{definition}
 	\label{def:delta_v}
For every $\mathbf{v} \in \cH$ the function
\begin{eqnarray}
\struc{\delta_\mathbf{v} (\mathbf{x})} \ := \ \prod_{j \in [n]:~v_j=a_j} \left( \frac{-x_j +b_j }{b_j-a_j}    \right) \cdot
\prod_{j \in [n]:~v_j=b_j} \left( \frac{x_j - a_j }{b_j-a_j}   \right) \label{Equ:DefKroneckerDelta}
\end{eqnarray}
is called the \struc{\emph{Kronecker delta (function)}} of the vector $\mathbf{v}$.
 \end{definition}
 
Next we show that the term ``Kronecker delta'' is justified, i.e., we show that for every $\mathbf{v} \in \cH$ the function $\delta_{\mathbf{v}}(\mathbf{x})$ takes the value zero for all $\mathbf{x} \in \cH$ except for $\mathbf{x}=\mathbf{v}$ where it takes the value one.
\begin{lemma}
	\label{lem:delta_v_values}
	For every  $\mathbf{v} \in \cH$ it holds that:
	\begin{align*}
	\delta_{\mathbf{v}}(\mathbf{x}) \ = \
	\begin{cases}
	 0,  &\qquad \text{for every } \mathbf{x} \in \cH \setminus \{\mathbf{v}\},\\
	 1,  & \qquad \text{for } \mathbf{x} =\mathbf{v}.
	\end{cases}
% % % 	\delta_{\mathbf{v}}(\mathbf{x})=0,  &\qquad \text{for every } \mathbf{x} \in \cH \setminus \{\mathbf{v}\},\\
% % % 	\delta_{\mathbf{v}}(\mathbf{x})=1,  & \qquad \text{for } \mathbf{x} =\mathbf{v}.
	\end{align*}
\end{lemma}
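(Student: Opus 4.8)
The plan is to reduce the statement to a coordinatewise computation, since $\delta_{\mathbf{v}}$ is by definition a product of univariate factors, one for each $j \in [n]$.

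First I would fix $\mathbf{v} \in \cH$ and an arbitrary point $\mathbf{x} \in \cH$, so that $x_j \in \{a_j, b_j\}$ for every $j$. For each index $j$ I would examine the corresponding factor of the product in \eqref{Equ:DefKroneckerDelta}. If $v_j = a_j$, the factor is $\frac{-x_j + b_j}{b_j - a_j}$, which evaluates to $1$ when $x_j = a_j = v_j$ and to $0$ when $x_j = b_j \neq v_j$; here the denominator is nonzero precisely because $a_j < b_j$. Symmetrically, if $v_j = b_j$, the factor is $\frac{x_j - a_j}{b_j - a_j}$, which evaluates to $1$ when $x_j = b_j = v_j$ and to $0$ when $x_j = a_j \neq v_j$. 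In either case the $j$-th factor equals $1$ if $x_j = v_j$ and $0$ if $x_j \neq v_j$.

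Next I would assemble the product. If $\mathbf{x} = \mathbf{v}$, then every factor equals $1$, so $\delta_{\mathbf{v}}(\mathbf{v}) = 1$. If $\mathbf{x} \in \cH \setminus \{\mathbf{v}\}$, then there exists at least one index $j$ with $x_j \neq v_j$, and the corresponding factor vanishes, forcing the whole product to be $0$. This is exactly the claimed case distinction, completing the proof.

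There is essentially no obstacle here: the only point requiring minimal care is that the definition uses $x_j \in \{a_j, b_j\}$, which is guaranteed precisely because $\mathbf{x}$ ranges over $\cH$ (if $\mathbf{x}$ were an arbitrary point of $\R^n$ the factors would not be $0/1$), and that $b_j - a_j \neq 0$ so the fractions are well defined.
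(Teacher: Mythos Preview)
Your proof is correct and follows essentially the same approach as the paper: both argue that if $\mathbf{x} \neq \mathbf{v}$ some coordinate differs and the corresponding factor vanishes, while at $\mathbf{x} = \mathbf{v}$ every factor equals $1$. Your version is simply a bit more explicit in verifying the $0/1$ values of each factor.
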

\begin{proof}
	On the one hand, if $\mathbf{x} \in \cH \setminus \{\mathbf{v}\}$, then there exists an index $k$ such that $\mathbf{x}_k\neq \mathbf{v}_k$. This implies that there exists at least one multiplicative factor in $\delta_\mathbf{v}$ which attains the value zero due to \eqref{Equ:DefKroneckerDelta}. On the other hand if $\mathbf{x} =\mathbf{v}$ then we have 
	$$\delta_\mathbf{v} (\mathbf{x}) \ = \ \prod_{j\in [n]:~v_j=a_j} \left( \frac{-a_j +b_j }{b_j-a_j}    \right)
	\prod_{j\in [n]:~v_j=b_j} \left( \frac{b_j - a_j }{b_j-a_j}   \right) \ = \ 1.$$
\end{proof}
 
 The main result of this section is the following theorem.
 
% \begin{thm}
% 	\label{thm:upper_bound_SONC_hypercube}
% 	If $f(\mathbf{x}) \geq 0$~ for every $\mathbf{x} \in \cH $ then 
% 	$$f=\sum_{j=1}^{2^n+2n}s_j H_j$$
% 	 for  $s_1,\ldots,s_{2^n+2n} \in C_{n,n-2}$ and $H_1,\ldots, H_{2^n+2n} \in R_{n}(K)$
%\timo{This $R_q(K)$ notation is horribly bad and needs to be avoided. If you just take this notation from the Positivstellensatz paper, then the statement becomes completely useless. The important fact is that the $H_j$ are either a single $g_j$ or of the form $\prod_{j = 1}^n g_j$.}
% 	
% 	%\adam{the statement of this theorem has to be clarified, I put the 'notation' $SONC_{2n}$ more as an intuition. I hope it is clear. In order to state it correctly we need all the definitions of sets $R_q(K)$ and so on}
% \end{thm}

 \begin{thm}
	\label{thm:upper_bound_SONC_hypercube}
	Let $f(\mathbf{x}) \in \R[\mathbf{x}]_{n,n}$. Then $f(\mathbf{x}) \geq 0$~ for every $\mathbf{x} \in \cH_{\mathcal{P}}$ if and only if $f$ has the following representation:
{\small	\begin{eqnarray}
\hspace*{1cm} f(\mathbf{x}) & = & \sum_{\mathbf{v} \in \cH_{\mathcal{P}}}  c_\mathbf{v} \delta_\mathbf{\mathbf{v}}(\mathbf{x})   +    \sum_{\mathbf{v} \in \cH \setminus \cH_{\mathcal{P}}}  c_\mathbf{v}  \delta_\mathbf{\mathbf{v}}(\mathbf{x}) p_\mathbf{v}(\mathbf{x})    +    \sum_{j = 1}^n s_j(\mathbf{x}) g_j(\mathbf{x}) + \sum_{j = 1}^n s_{n + j}(\mathbf{x}) (-g_j(\mathbf{x})),\label{Equ:PolynomialNonnegativityCertificate}
\end{eqnarray}}
	where $s_1,\ldots, s_{2n} \in C_{n,n-2}$, $c_\mathbf{v} \in \mathbb{R}_{\geq 0}$ and $p_\mathbf{v} \in \mathcal{P}$. % and $\delta_{\mathbf{v}} \in R_n(K)$ for every $\mathbf{v} \in \cH$.
\end{thm}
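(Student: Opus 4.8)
The plan is to derive both implications from the fact that $\cH$ is the zero set of the regular sequence $g_1,\dots,g_n$, combined with the interpolation property of the Kronecker deltas from Lemma~\ref{lem:delta_v_values}. For the implication ``representation $\Rightarrow$ nonnegativity'' nothing more is needed: evaluating \eqref{Equ:PolynomialNonnegativityCertificate} at an arbitrary $\mathbf{x}\in\cH_{\mathcal P}\subseteq\cH$, the factors $g_j(\mathbf{x})$ all vanish, so the two sums involving the $g_j$ contribute $0$; by Lemma~\ref{lem:delta_v_values} we have $\delta_{\mathbf{v}}(\mathbf{x})=0$ for every $\mathbf{v}\in\cH\setminus\{\mathbf{x}\}$, hence the sum over $\cH\setminus\cH_{\mathcal P}$ vanishes (each such $\mathbf{v}$ differs from $\mathbf{x}$) and the sum over $\cH_{\mathcal P}$ collapses to $c_{\mathbf{x}}$. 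Thus $f(\mathbf{x})=c_{\mathbf{x}}\geq 0$.

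For the converse I would first record the ideal-theoretic ingredient. Since each $g_j=(x_j-a_j)(x_j-b_j)$ is monic of degree two in the single variable $x_j$, iterated polynomial division by $g_1,\dots,g_n$ writes any $h\in\R[\mathbf{x}]$ as $h=\sum_{j=1}^n q_j g_j+r$, where $r$ is multilinear and $\deg q_j\leq\deg h-2$ for all $j$; if $h$ vanishes on $\cH$ then so does $r$, and a multilinear polynomial vanishing at all $2^n$ vertices of $\cH$ is identically zero (a straightforward induction on $n$). Hence a polynomial $h$ vanishing on $\cH$ is of the form $\sum_{j=1}^n q_jg_j$ with $\deg q_j\leq\deg h-2$. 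Now assume $f\geq 0$ on $\cH_{\mathcal P}$, and set $c_{\mathbf{v}}:=f(\mathbf{v})\geq 0$ for $\mathbf{v}\in\cH_{\mathcal P}$. For $\mathbf{v}\in\cH\setminus\cH_{\mathcal P}$ there is a constraint in $\mathcal P$ that is negative at $\mathbf{v}$; choose $p_{\mathbf{v}}\in\mathcal P$ with $\sgn p_{\mathbf{v}}(\mathbf{v})=\sgn f(\mathbf{v})$ and put $c_{\mathbf{v}}:=f(\mathbf{v})/p_{\mathbf{v}}(\mathbf{v})\geq 0$ (with $c_{\mathbf{v}}=0$ when $f(\mathbf{v})=0$). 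Using Lemma~\ref{lem:delta_v_values} one verifies pointwise that
\[
\tilde h\ :=\ f\ -\ \sum_{\mathbf{v}\in\cH_{\mathcal P}} c_{\mathbf{v}}\,\delta_{\mathbf{v}}\ -\ \sum_{\mathbf{v}\in\cH\setminus\cH_{\mathcal P}} c_{\mathbf{v}}\,\delta_{\mathbf{v}}\,p_{\mathbf{v}}
\]
vanishes on $\cH$: its value at $\mathbf{w}\in\cH_{\mathcal P}$ is $f(\mathbf{w})-c_{\mathbf{w}}=0$, and at $\mathbf{w}\in\cH\setminus\cH_{\mathcal P}$ it is $f(\mathbf{w})-c_{\mathbf{w}}p_{\mathbf{w}}(\mathbf{w})=0$. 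By the previous step $\tilde h=\sum_{j=1}^n q_j g_j$ with controlled degrees; writing each $q_j$ as a difference $q_j=s_j-s_{n+j}$ of two SONC polynomials and using $q_j g_j=s_jg_j+s_{n+j}(-g_j)$ rearranges the identity into the claimed form \eqref{Equ:PolynomialNonnegativityCertificate}.

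Two steps carry the real weight. The first is the degree bookkeeping that places the $s_j$ in $C_{n,n-2}$: in the unconstrained case $\cH_{\mathcal P}=\cH$ one has $\deg\tilde h\leq n$ (this is exactly where the hypothesis $f\in\R[\mathbf{x}]_{n,n}$ enters), hence $\deg q_j\leq n-2$; when inequality constraints are present the products $\delta_{\mathbf{v}}p_{\mathbf{v}}$ have degree up to $n+d$, and one must first reduce them modulo $(g_1,\dots,g_n)$ — i.e.\ absorb their top parts into the $g_j$-terms — to stay within the required degree, and the sign of $p_{\mathbf{v}}$ at vertices with $f(\mathbf{v})>0$ is a related subtlety that may force passing to a product of constraints. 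The second is the claim that an arbitrary polynomial is a difference of two SONC polynomials of the same degree; here I would dominate the polynomial by a large positive multiple of a sum of monomial squares whose Newton polytope contains every relevant exponent in its interior, so that each leftover monomial is covered by a nonnegative circuit polynomial — equivalently, $C_{n,n-2}$ is full-dimensional and hence $C_{n,n-2}-C_{n,n-2}=\R[\mathbf{x}]_{n,n-2}$. I expect this last point, together with the attendant Newton-polytope and parity bookkeeping, to be the main obstacle.
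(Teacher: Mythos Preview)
Your outline is the paper's proof: interpolate $f$ on $\cH$ via the Kronecker deltas with weights $c_{\mathbf v}=f(\mathbf v)$ (resp.\ $f(\mathbf v)/p_{\mathbf v}(\mathbf v)$), observe that the difference vanishes on $\cH$ and hence lies in $(g_1,\dots,g_n)$ with degree control, and finally express each multiplier $q_j$ as a difference of two SONCs. Your elementary iterated-division argument is a clean alternative to the paper's Gr\"obner-basis computation (Proposition~\ref{Prop:HypercubeIdealRepresentation}).

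Where the paper adds content is exactly the step you flag as the main obstacle. Rather than an abstract full-dimensionality claim $C_{n,n-2}-C_{n,n-2}=\R[\mathbf x]_{n,n-2}$, Theorem~\ref{thm:f_vanishing_on_H_is_SONC_2n} builds the decomposition explicitly term by term: for a monomial $\mathbf x^{\boldsymbol\beta}$ whose odd entries sit in positions $1,\dots,k$, it takes $\boldsymbol\alpha(1)=\boldsymbol\beta+\sum_{j\le\lceil k/2\rceil}\mathbf e_j-\sum_{j>\lceil k/2\rceil}\mathbf e_j$ and $\boldsymbol\alpha(2)$ its mirror, so that $\boldsymbol\beta=\tfrac12(\boldsymbol\alpha(1)+\boldsymbol\alpha(2))$ with both $\boldsymbol\alpha(i)\in(2\N)^n$. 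The degree bookkeeping is then a parity argument: when $k$ is odd, $\deg\boldsymbol\beta$ is itself odd and therefore strictly below the even upper bound, leaving room for $\deg\boldsymbol\alpha(1)=\deg\boldsymbol\beta+1$. This is precisely the ``Newton-polytope and parity bookkeeping'' you anticipated. On the two subtleties you raise: for $\mathbf v\in\cH\setminus\cH_{\cP}$ with $f(\mathbf v)\ge 0$ the paper simply takes $p_{\mathbf v}=1$ (so it, too, steps slightly outside $\cP$ as literally stated); and it does \emph{not} pre-reduce $\delta_{\mathbf v}p_{\mathbf v}$ modulo the $g_j$, instead recording $\deg\tilde h\le n+d$ and applying Theorem~\ref{thm:f_vanishing_on_H_is_SONC_2n} at that degree --- which is why Corollary~\ref{Cor:DegreeNplusDCertificate} yields an $(n{+}d)$-degree certificate rather than the $C_{n,n-2}$ announced in the theorem. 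Your instinct to reduce first is actually sharper than what the paper carries out.
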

Since we are interested in optimization on the boolean hypercube $\cH$, we assume without loss of generality that the polynomial $f$ considered in Theorem~\ref{thm:upper_bound_SONC_hypercube} has degree at most $n$. Indeed, it $f$ has degree bigger than $n$, one can efficiently reduce the degree of $f$ by applying iteratively the polynomial division with respect to polynomials $g_j$ with $j \in [n]$. The remainder of the division process is a polynomial with degree at most $n$ that agrees with $f$ on all the vertices of $\cH$.

We begin with proving the easy direction of the equivalence stated in Theorem \ref{thm:upper_bound_SONC_hypercube}.

\begin{lemma}
 If $f$ admits a decomposition of the form \eqref{Equ:PolynomialNonnegativityCertificate}, then $f(\mathbf{x})$ is nonnegative for all $\mathbf{x} \in \cH_{\cP}$.
 \label{Lemma:CertificateIsIndeedCertificate}
\end{lemma}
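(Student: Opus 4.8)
The plan is to evaluate the right-hand side of \eqref{Equ:PolynomialNonnegativityCertificate} at an arbitrary point $\mathbf{x} \in \cH_{\cP}$ and to observe that each summand is nonnegative; in fact, all summands but one vanish. Note at the outset that this easy direction does not use the SONC property of the $s_j$ at all — only the nonnegativity of the coefficients $c_{\mathbf{v}}$ together with the fact that the $g_j$ vanish identically on $\cH$.

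First I would use that $\cH_{\cP} \subseteq \cH = \prod_{j=1}^n \{a_j,b_j\}$. Hence, for every $\mathbf{x} \in \cH_{\cP}$ and every $j \in [n]$ we have $x_j \in \{a_j,b_j\}$, so $g_j(\mathbf{x}) = (x_j-a_j)(x_j-b_j) = 0$. Consequently the two sums $\sum_{j=1}^n s_j(\mathbf{x}) g_j(\mathbf{x})$ and $\sum_{j=1}^n s_{n+j}(\mathbf{x})(-g_j(\mathbf{x}))$ both evaluate to $0$ at $\mathbf{x}$, no matter what the polynomials $s_j$ are.

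Next I would invoke Lemma~\ref{lem:delta_v_values}: for $\mathbf{x},\mathbf{v} \in \cH$ the Kronecker delta $\delta_{\mathbf{v}}(\mathbf{x})$ equals $1$ if $\mathbf{x}=\mathbf{v}$ and $0$ otherwise. In the second sum of \eqref{Equ:PolynomialNonnegativityCertificate} the index $\mathbf{v}$ runs over $\cH \setminus \cH_{\cP}$; since $\mathbf{x} \in \cH_{\cP}$, we have $\mathbf{v} \neq \mathbf{x}$ for each such $\mathbf{v}$, hence $\delta_{\mathbf{v}}(\mathbf{x}) = 0$ and the whole sum — including the uncontrolled factors $p_{\mathbf{v}}(\mathbf{x})$ — vanishes. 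In the first sum $\mathbf{v}$ runs over $\cH_{\cP}$, and again $\delta_{\mathbf{v}}(\mathbf{x}) = 0$ for all $\mathbf{v} \neq \mathbf{x}$, while $\delta_{\mathbf{x}}(\mathbf{x}) = 1$; thus only the term $\mathbf{v}=\mathbf{x}$ survives, and the first sum equals $c_{\mathbf{x}}$.

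Putting this together yields $f(\mathbf{x}) = c_{\mathbf{x}}$ for every $\mathbf{x} \in \cH_{\cP}$, which is nonnegative since $c_{\mathbf{x}} \in \R_{\geq 0}$ by hypothesis. There is no genuine obstacle here — the statement is a routine verification; the only subtlety worth stating carefully is that no index $\mathbf{v}$ occurring in the second sum can coincide with a point of $\cH_{\cP}$, which is precisely what makes the sign-uncontrolled factors $p_{\mathbf{v}}$ irrelevant on $\cH_{\cP}$.
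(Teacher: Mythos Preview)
Your proof is correct and follows essentially the same term-by-term evaluation as the paper. The only difference is one of emphasis: the paper argues that each summand is nonnegative on $\cH_{\cP}$ (invoking that the $s_j$ are SONC and that $p_{\mathbf{v}}(\mathbf{x})\geq 0$ on $\cH_{\cP}$), whereas you sharpen this by observing that all summands except $c_{\mathbf{x}}$ in fact \emph{vanish} --- so neither the nonnegativity of the $s_j$ nor the sign of $p_{\mathbf{v}}$ is actually needed, and you recover the exact value $f(\mathbf{x})=c_{\mathbf{x}}$.
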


\begin{proof}
The coefficients $c_\mathbf{v}$ are nonnegative, all $s_j(\mathbf{x})$ are SONC and hence nonnegative on $\R^n$. We have $\pm g_j(\mathbf{x}) \geq 0$ for all $\mathbf{x} \in \cH$, and for all choices of $\mathbf{v} \in \cH$ we have $p_{\mathbf{v}}(\mathbf{x}) \geq 0$ for all $\mathbf{x} \in \cH_\cP$, and $\delta_{\mathbf{v}}(\mathbf{x}) \in \{0,1\}$ for all $\mathbf{x} \in \cH$. Thus, the right hand side of \eqref{Equ:PolynomialNonnegativityCertificate} is a sum of positive terms for all $\mathbf{x} \in \cH_\cP$.
\end{proof}

We postpone the rest of the proof of Theorem~\ref{thm:upper_bound_SONC_hypercube} to the end of the section. Now, we state an result about the presentation of the Kronecker delta function $\delta_\mathbf{\mathbf{v}}$. 

\begin{lemma}
	\label{lem:Kronecker_in_SONC_2n}
For every $\mathbf{v} \in \cH$ the Kronecker delta function can be written as
$$\delta_\mathbf{\mathbf{v}} \ = \ \sum_{j=1}^{2^n}s_j H_j^{(n)},$$
for  $s_1,\ldots,s_{2^n} \in \mathbb{R}_{\geq 0}$ and every $H_j^{(n)}$ given as in \eqref{Equ:ProductConstraint} with $q=n$ and $K$ given by the hypercube constraints $g_1,\ldots,g_n$ and $l_1,\ldots,l_{2n}$.
\end{lemma}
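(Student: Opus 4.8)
The plan is to exploit the fact that, by its very definition \eqref{Equ:DefKroneckerDelta}, $\delta_{\mathbf{v}}$ is — up to a positive scalar — nothing but a single product of exactly $n$ of the (redundant) box constraints describing $\cH$, one per coordinate, and then to phrase that one product as the (degenerate) sum of $2^n$ terms asked for in the statement.

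First I would pin down the presentation of $\cH$ as a compact basic closed semialgebraic set of the form \eqref{Equ:ConstraintKbox}: besides the defining (in)equalities coming from $g_1,\dots,g_n$, we are free to take as the $2n$ box constraints the interval inequalities $l_{2j-1}(\mathbf{x}) := x_j-a_j\ge 0$ and $l_{2j}(\mathbf{x}) := b_j-x_j\ge 0$ for $j\in[n]$; these are nonnegative on $\cH\subseteq\prod_{j}[a_j,b_j]$ and are in fact redundantly implied by $g_j(\mathbf{x})=0$, so this is a legitimate description. Then I read the factors off \eqref{Equ:DefKroneckerDelta}: for each $j$ with $v_j=a_j$ the corresponding factor is $\tfrac{-x_j+b_j}{b_j-a_j}=\tfrac{1}{b_j-a_j}\,l_{2j}(\mathbf{x})$, and for each $j$ with $v_j=b_j$ it is $\tfrac{x_j-a_j}{b_j-a_j}=\tfrac{1}{b_j-a_j}\,l_{2j-1}(\mathbf{x})$. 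Multiplying these $n$ identities gives
$$\delta_{\mathbf{v}} \ = \ \left(\prod_{j=1}^n\frac{1}{b_j-a_j}\right)\prod_{j=1}^n l_{\tau_{\mathbf{v}}(j)}(\mathbf{x}),$$
where $\tau_{\mathbf{v}}(j)=2j$ if $v_j=a_j$ and $\tau_{\mathbf{v}}(j)=2j-1$ if $v_j=b_j$. The right-hand side is a nonnegative constant times a single product of exactly $n$ of the $l_k$, i.e. an expression of the form \eqref{Equ:ProductConstraint} with $q=n$ (and with no $g_j$-factor needed at all).

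To match the precise shape of the statement I would finally enumerate the $2^n$ products $H_j^{(n)}$ obtained by choosing, independently for each coordinate $j'\in[n]$, one of its two box constraints $l_{2j'-1},l_{2j'}$. For the given $\mathbf{v}$ exactly one of these products equals $\prod_{j}l_{\tau_{\mathbf{v}}(j)}$; setting its coefficient to $\prod_{j}(b_j-a_j)^{-1}>0$ and the remaining $2^n-1$ coefficients to $0$ yields $\delta_{\mathbf{v}}=\sum_{j=1}^{2^n}s_j H_j^{(n)}$ with $s_1,\dots,s_{2^n}\in\R_{\ge 0}$, as claimed. The argument is essentially bookkeeping; the one point deserving a little care is identifying the box constraints $l_k$ of \eqref{Equ:ConstraintKbox} with these interval inequalities rather than the generic large-$N$ constraints $N\pm x_j$ — this is exactly what lets the products $H_j^{(n)}$ avoid any $g_j$ and keeps the number of factors at $q=n$.
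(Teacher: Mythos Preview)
Your proof is correct and follows essentially the same approach as the paper: factor out the positive constant $\prod_j(b_j-a_j)^{-1}$ and observe that what remains is a single product of $n$ box constraints $x_j-a_j$ or $b_j-x_j$, hence an $H^{(n)}$. You are in fact slightly more careful than the paper in explicitly identifying the $l_k$ with the interval constraints $x_j-a_j\ge0$, $b_j-x_j\ge0$ (rather than the generic $N\pm x_j$ of \eqref{Equ:ConstraintKbox}) and in padding with zero coefficients to produce exactly $2^n$ summands.
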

\begin{proof}
First note that the function $\delta_\mathbf{v}$ can be rewritten as	
$$
\delta_\mathbf{v} (\mathbf{x})\ = \ \prod_{j=1}^{2n} \frac{1 }{b_j-a_j}    
\prod_{j\in [n]:~v_j=a_j} \left( -x_j +b_j     \right)
\prod_{j\in [n]:~v_j=b_j} \left( x_j - a_j   \right),
$$	
where $\prod_{j=1}^n \frac{1 }{b_j-a_j}  \in \mathbb{R}_{\geq 0}$. Now, the proof follows just by noting that for every $j \in [n]$ both inequalities $-x_j +b_j \geq 0 $ and $x_j - a_j \geq 0$ are in $K$.
% 	\adam{here once again we need the definition of K from your paper to be formally correct.}
\end{proof}

The following statement is well-known in similar variations; see e.g. \cite[Lemma 2.2 and its proof]{BarakS14}. For clarity, we provide an own proof here.
 
\begin{prop}
Let $f \in \R[\mathbf{x}]_{n,2d}$ be a polynomial vanishing on $\cH$. Then $f = \sum_{j = 1}^n p_j g_j$ for some polynomials $p_j \in \R[\mathbf{x}]_{n,2d-2}$.
\label{Prop:HypercubeIdealRepresentation}
\end{prop}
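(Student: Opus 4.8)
The plan is to prove this by induction on the number of variables $n$, treating one variable at a time via polynomial division. First I would set up the base case $n=1$: a polynomial $f \in \R[x_1]_{1,2d}$ vanishing on $\{a_1,b_1\}$ has both $a_1$ and $b_1$ as roots, so $g_1(x_1)=(x_1-a_1)(x_1-b_1)$ divides $f$, giving $f = p_1 g_1$ with $\deg p_1 \leq 2d-2$ as required.

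For the inductive step, I would perform polynomial division of $f$ by $g_n = (x_n-a_n)(x_n-b_n)$, treating $f$ as a polynomial in $x_n$ with coefficients in $\R[x_1,\dots,x_{n-1}]$. This yields $f = q\, g_n + r$, where the remainder $r$ has degree at most $1$ in $x_n$, so $r(\mathbf{x}) = r_0(x_1,\dots,x_{n-1}) + x_n\, r_1(x_1,\dots,x_{n-1})$. One checks that $\deg q \leq 2d-2$ and $\deg r \leq 2d$ (division by a monic-in-$x_n$ quadratic does not raise the total degree of the remainder). Since $g_n$ vanishes on $\cH$ and $f$ vanishes on $\cH$, the remainder $r$ also vanishes on all of $\cH$. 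Now I would fix an arbitrary point $(c_1,\dots,c_{n-1}) \in \prod_{j=1}^{n-1}\{a_j,b_j\}$ and substitute it: the univariate polynomial $r(c_1,\dots,c_{n-1},x_n) = r_0(\mathbf{c}) + x_n r_1(\mathbf{c})$ must vanish at both $x_n=a_n$ and $x_n=b_n$; since $a_n\neq b_n$, an affine function vanishing at two distinct points is identically zero, so $r_0(\mathbf{c}) = r_1(\mathbf{c}) = 0$. As this holds for every vertex $\mathbf{c}$ of the $(n-1)$-dimensional hypercube, both $r_0$ and $r_1$ vanish on $\prod_{j=1}^{n-1}\{a_j,b_j\}$.

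Then I would apply the induction hypothesis to $r_0$ and to $r_1$ (both are $(n-1)$-variate polynomials of degree at most $2d$ and $2d-1$ respectively, vanishing on the $(n-1)$-dimensional hypercube): there exist polynomials $p_j^{(0)}, p_j^{(1)} \in \R[x_1,\dots,x_{n-1}]$ of appropriately bounded degree with $r_0 = \sum_{j=1}^{n-1} p_j^{(0)} g_j$ and $r_1 = \sum_{j=1}^{n-1} p_j^{(1)} g_j$. Combining, $r = \sum_{j=1}^{n-1}(p_j^{(0)} + x_n p_j^{(1)}) g_j$, and hence $f = q\, g_n + \sum_{j=1}^{n-1}(p_j^{(0)} + x_n p_j^{(1)}) g_j$, which is the desired representation with $p_n := q$ and $p_j := p_j^{(0)} + x_n p_j^{(1)}$ for $j < n$.

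The main obstacle — really the only point requiring care rather than routine bookkeeping — is tracking the degree bounds through the division and the recursion, to confirm that each $p_j$ indeed lands in $\R[\mathbf{x}]_{n,2d-2}$ as claimed and not merely in degree $2d-1$ or $2d$. The key observations making this work are that dividing by the quadratic $g_n$ (monic in $x_n$ up to scalar) produces a quotient of total degree at most $\deg f - 2$ and a remainder of total degree at most $\deg f$, and that when we recurse on $r_0$ (degree $\leq 2d$) the induction gives factors $p_j^{(0)}$ of degree $\leq 2d-2$, while $r_1$ has degree $\leq 2d-1$ so $x_n p_j^{(1)}$ contributes degree $\leq (2d-1-2)+1 = 2d-2$; thus the combined $p_j$ stays within degree $2d-2$. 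I would state these degree estimates explicitly in the write-up but not belabor the elementary verification.
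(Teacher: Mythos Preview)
Your inductive approach is essentially correct and genuinely different from the paper's argument. The paper proceeds by computational algebra: it observes that $\cJ = \langle g_1,\ldots,g_n\rangle$ is zero-dimensional and radical (via Seidenberg's Lemma), so every polynomial vanishing on $\cH$ lies in $\cJ$; then it verifies via Buchberger's criterion that $\{g_1,\ldots,g_n\}$ is a Gr\"obner basis with respect to graded lex, which immediately yields the degree bound $\deg p_j \le 2d-2$. Your proof, by contrast, is elementary and constructive: iterated univariate division plus a direct argument that the remainder's coefficients vanish on the smaller hypercube. This avoids any appeal to Gr\"obner bases or radical ideals, and in fact produces the cofactors $p_j$ explicitly.

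There is, however, one point in your degree bookkeeping that is not quite right as written. When you recurse on $r_1$ (of degree $\le 2d-1$), you compute the resulting cofactor degree as $(2d-1)-2 = 2d-3$. But the proposition, as stated, is phrased for inputs of degree $\le 2d$ with an \emph{even} bound; applying the induction hypothesis to $r_1 \in \R[\mathbf{x}]_{n-1,2d}$ only yields $\deg p_j^{(1)} \le 2d-2$, and then $x_n p_j^{(1)}$ has degree $\le 2d-1$, overshooting the target by one. The remedy is simple: prove by induction on $n$ the slightly stronger statement that for \emph{every} integer $D \ge 0$, any $f$ of degree $\le D$ vanishing on $\cH$ admits $f = \sum_j p_j g_j$ with $\deg p_j \le D-2$ (interpreting this as $p_j = 0$ when $D \le 1$, which holds since an affine polynomial vanishing on $\cH$ is identically zero). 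Your base case and inductive step go through verbatim for this stronger statement, and then the calculation $(2d-1)-2+1 = 2d-2$ is justified. You should state this strengthened hypothesis explicitly at the outset of the induction.
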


\smallskip

\begin{proof}
	Let $\struc{\cJ} := \langle g_1,\ldots,g_n \rangle$ be the ideal generated by the $g_j$'s. Let $\struc{\cV(\cJ)}$ denote the affine variety corresponding to $\cJ$, $\struc{\cI(\cV(\cJ))}$ denote its radical ideal, and let $\struc{\cI(\cH)}$ denote the ideal of $\cH$. It follows from $\prod_{j= 1}^n g_j \in \cJ$ that $\cV(\cJ) \subseteq \cH$ and hence $\cI(\cH) \subseteq \cI(\cV(\cJ)) = \cJ$. The last equality holds since $\cJ$ itself is a radical ideal. This results from Seidenberg's Lemma; see \cite[Proposition 3.7.15]{Kreuzer:Robbiano} by means of the following observations. The affine variety $\cV(\cJ)$ consists exactly of the points defining $\cH$, therefore we know that $\cJ$ is a zero-dimensional ideal. Furthermore, for every $j\in [n]$ the polynomials $g_j$ satisfy $g_j \in \cJ \cap \R[x_j]$ and $\gcd(g_j,g_j')=1$. Thus, every $f \in \cI(\cH)$ is of the form $f = \sum_{j = 1}^n p_j g_j$.
	
	Moreover $\struc{G} := \{g_1,\ldots,g_n\}$ is a Gr\"obner basis for $\cJ$ with respect to the graded lexicographic order \struc{$\prec_{\rm glex}$}. This follows from Buchberger's Criterion, which says that $G$ is a Gr\"obner basis for $\cJ$ if and only if for all pairs $i\neq j$ the remainder on the division of the $S$-polynomials $\struc{S(g_i,g_j)}$ by $G$ with respect to $\prec_{\rm glex}$ is zero. Consider an arbitrary pair $g_i,g_j$ with $i > j$. Then the corresponding $S$-polynomial is given by 
	\[
	S(g_i,g_j) \ = \ (a_j+b_j) x_i^2 x_j -(a_i+b_i) x_i x_j^2 - a_jb_j x_i^2  + a_ib_i x_j^2  \;.
	\]
	Applying polynomial division with respect to $\prec_{\rm glex}$ yields the remainder $0$ and hence $G$ is a Gr\"obner basis for $\cJ$ with respect to $\prec_{\rm glex}$. Therefore, we conclude that if $f \in \R[\mathbf{x}]_{n,2d}$, then $\deg(p_j)\leq 2d-2$.
\end{proof}

For an introduction to Gr\"obner bases see for example \cite{Cox:Little:OShea}.
%We remark that the representation described in Proposition \ref{Prop:HypercubeIdealRepresentation} can be found easily since $G$ is a Gr\"obner basis for $J$ with respect to the lexicographic monomial ordering. To see this, apply Buchberger's algorithm to $\struc{G} := \{g_1,\ldots,g_n\}$. Consider an arbitrary pair $g_i,g_j$ with $i < j$. Then the corresponding $S$-polynomial is given by $x_j^2 (x_i^2 - 1) - x_i^2 (x_j^2 - 1) = x_j^2 - x_i^2$. Applying polynomial division with respect to $\prec$ yields a remainder $0$. Thus, Buchberger's algorithm terminates without adding further polynomials to $G$ and hence $G$ is a Gr\"obner basis for $J$ with respect to $\prec$. For an introduction to Gr\"obner bases see for example \cite{Cox:Little:OShea}.

\medskip

\begin{thm}
	\label{thm:f_vanishing_on_H_is_SONC_2n}
Let $d \in \N$ and $f \in \R[\mathbf{x}]_{n,2d+2}$ such that $f$ vanishes on $\cH$. Then there exist $s_1,\ldots,s_{2n}$ $\in C_{n,2d}$ such that $f = \sum_{j = 1}^n s_j g_j + \sum_{j = 1}^n s_{n + j} (-g_j)$. 
\end{thm}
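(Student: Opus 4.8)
The plan is to deduce the statement from Proposition~\ref{Prop:HypercubeIdealRepresentation} together with one elementary observation about the SONC cone: every polynomial of degree at most $2d$ is a \emph{difference} of two polynomials in $C_{n,2d}$. Concretely, I would first apply Proposition~\ref{Prop:HypercubeIdealRepresentation} with $d+1$ in place of $d$: since $f\in\R[\mathbf{x}]_{n,2d+2}$ vanishes on $\cH$, there are $p_1,\dots,p_n\in\R[\mathbf{x}]_{n,2d}$ with $f=\sum_{j=1}^n p_jg_j$. It then suffices to prove the claim that each $p\in\R[\mathbf{x}]_{n,2d}$ decomposes as $p=s'-s''$ with $s',s''\in C_{n,2d}$; indeed, applying this to every $p_j$ and rearranging yields $f=\sum_{j=1}^n s_j'g_j+\sum_{j=1}^n s_j''(-g_j)$, so that $s_j:=s_j'$ and $s_{n+j}:=s_j''$ for $j\in[n]$ are the desired SONCs.

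To prove the claim I would argue term by term, writing $p=\sum_{\boldsymbol{\alpha}}c_{\boldsymbol{\alpha}}\mathbf{x}^{\boldsymbol{\alpha}}$ with $|\boldsymbol{\alpha}|\le 2d$ ($|\boldsymbol{\alpha}|$ denoting the coordinate sum) and routing each monomial into $s'$ or $s''$. If $\boldsymbol{\alpha}\in(2\N)^n$, then $|c_{\boldsymbol{\alpha}}|\mathbf{x}^{\boldsymbol{\alpha}}$ is a monomial square, hence a nonnegative circuit polynomial of degree $\le 2d$, and $c_{\boldsymbol{\alpha}}\mathbf{x}^{\boldsymbol{\alpha}}$ is added to $s'$ if $c_{\boldsymbol{\alpha}}>0$ and to $-s''$ if $c_{\boldsymbol{\alpha}}<0$. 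If $\boldsymbol{\alpha}\notin(2\N)^n$, I split $\boldsymbol{\alpha}=\boldsymbol{\gamma}_1+\boldsymbol{\gamma}_2$ with $\boldsymbol{\gamma}_1,\boldsymbol{\gamma}_2\in\N^n$ as balanced as possible, i.e. $|\boldsymbol{\gamma}_1|=\lceil|\boldsymbol{\alpha}|/2\rceil$ and $|\boldsymbol{\gamma}_2|=\lfloor|\boldsymbol{\alpha}|/2\rfloor$, which is possible by distributing the exponents of $\boldsymbol{\alpha}$ greedily; here $\boldsymbol{\gamma}_1\ne\boldsymbol{\gamma}_2$ because $\boldsymbol{\alpha}$ is not even, and $|2\boldsymbol{\gamma}_i|\le 2d$ because $|\boldsymbol{\alpha}|\le 2d$ (indeed $|\boldsymbol{\alpha}|\le 2d-1$ when $|\boldsymbol{\alpha}|$ is odd). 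Since $\boldsymbol{\alpha}=\tfrac12(2\boldsymbol{\gamma}_1)+\tfrac12(2\boldsymbol{\gamma}_2)$ is the midpoint of the segment $\conv\{2\boldsymbol{\gamma}_1,2\boldsymbol{\gamma}_2\}$, putting $M:=|c_{\boldsymbol{\alpha}}|/2$ the polynomial $q_{\boldsymbol{\alpha}}:=M\mathbf{x}^{2\boldsymbol{\gamma}_1}+M\mathbf{x}^{2\boldsymbol{\gamma}_2}+c_{\boldsymbol{\alpha}}\mathbf{x}^{\boldsymbol{\alpha}}$ is a circuit polynomial (with $r=1$ and barycentric coordinates $\tfrac12,\tfrac12$) whose circuit number is $\Theta_{q_{\boldsymbol{\alpha}}}=2M=|c_{\boldsymbol{\alpha}}|$; by Theorem~\ref{Thm:CircuitPolynomialNonnegativity} it is therefore nonnegative, and it has degree $\le 2d$. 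I add $q_{\boldsymbol{\alpha}}$ to $s'$ and the monomial-square part $M\mathbf{x}^{2\boldsymbol{\gamma}_1}+M\mathbf{x}^{2\boldsymbol{\gamma}_2}$ to $s''$. Summing all contributions (with $0\in C_{n,2d}$ covering empty collections) produces $s',s''\in C_{n,2d}$ with $s'-s''=p$, proving the claim.

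I do not expect a serious obstacle, but the step that needs the most care is the degree bookkeeping in the non-even case: an arbitrary decomposition $\boldsymbol{\alpha}=\boldsymbol{\gamma}_1+\boldsymbol{\gamma}_2$ could force $\deg\mathbf{x}^{2\boldsymbol{\gamma}_1}>2d$, which is precisely why the balanced split is used. The remaining verifications are routine: that monomial squares and the two-outer-term polynomials $q_{\boldsymbol{\alpha}}$ satisfy conditions (C1)--(C2) and hence lie in $C_{n,2d}$, and that finite sums of nonnegative circuit polynomials stay in $C_{n,2d}$ so that $s'$ and $s''$ are genuine SONCs.
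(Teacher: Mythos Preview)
Your proof is correct and follows essentially the same route as the paper: both apply Proposition~\ref{Prop:HypercubeIdealRepresentation}, then decompose each coefficient polynomial $p_j$ monomial by monomial, handling even exponents as monomial squares and non-even exponents via a two-vertex circuit polynomial with the given exponent as midpoint, with the same parity argument controlling the degree. The only cosmetic difference is your choice of even endpoints $2\boldsymbol{\gamma}_1,2\boldsymbol{\gamma}_2$ obtained from a balanced split of $\boldsymbol{\alpha}$, whereas the paper shifts each odd entry of $\boldsymbol{\beta}$ by $\pm 1$; both constructions yield nonnegative circuit polynomials in $C_{n,2d}$.
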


\begin{proof}
By Proposition \ref{Prop:HypercubeIdealRepresentation} we know that $f = \sum_{j = 1}^n p_j g_j$ for some polynomials $p_j$ of degree $\leq 2d$. Hence, it is sufficient to show that every single term $p_j g_j$ is of the form $\sum_{j = 1}^n s_j g_j - \sum_{j = 1}^n s_{n + j} g_j$ for some $s_1,\ldots,s_{2n} \in C_{n,2d}$. Let $p_j = \sum_{i = 1}^\ell a_{ji} m_{ji}$ where every $a_{ji} \in \R$ and every $m_{ji}$ is a single monomial. We show that $p_j g_j$  has the desired form by investigating an arbitrary individual term $a_{ji} m_{ji} g_j$.

\textbf{Case 1:} Assume the exponent of $m_{ji}$ is contained in $(2\N)^n$. If $a_{ji} m_{ji}$ is a monomial square, then $a_{ji} m_{ji}$ is a circuit polynomial. If $a_{ji} < 0$, then $-a_{ji} m_{ji}$ is a monomial square. In both cases we obtain a representation $s_{ji} (\pm g_{ji})$, where $s_{ji} \in C_{n,2d}$.

\textbf{Case 2:} Assume the the exponent $\boldsymbol{\beta}$ of $m_{ji}$ contains odd numbers. Without loss of generality, assume that $\boldsymbol{\beta} = (\beta_1,\ldots,\beta_k,\beta_{k+1},\ldots,\beta_n)$ such that the first $k$ entries are odd and the remaining $n-k$ entries are even.
We construct a SONC polynomial $s_{ji} = a_{\boldsymbol{\alp}(1)} \mathbf{x}^{\boldsymbol{\alp}(1)} + a_{\boldsymbol{\alp}(2)} \mathbf{x}^{\boldsymbol{\alp}(2)} + a_{ji} \mathbf{x}^{\boldsymbol{\beta}}$ such that 
\begin{eqnarray}
 \boldsymbol{\alp}(1) & = & \boldsymbol{\beta} + \sum_{j = 1}^{\lceil k/2 \rceil} \mathbf{e}_j - \sum_{j = \lceil k/2 \rceil + 1}^{k} \mathbf{e}_j, \qquad \boldsymbol{\alp}(2) \ = \ \boldsymbol{\beta} - \sum_{j = 1}^{\lceil k/2 \rceil} \mathbf{e}_j + \sum_{j = \lceil k/2 \rceil + 1}^{k} \mathbf{e}_j, \label{Equ:ProofSONCRepresenation1} \\
 |a_{ji}| & \leq & \sqrt{2 a_{\boldsymbol{\alp}(1)} a_{\boldsymbol{\alp}(2)}}. \label{Equ:ProofSONCRepresenation2}
\end{eqnarray}
By the construction \eqref{Equ:ProofSONCRepresenation1} $\boldsymbol{\alp}(1),\boldsymbol{\alp}(2) \in (2\N)^n$ and $\boldsymbol{\beta} = 1/2 (\boldsymbol{\alp}(1) + \boldsymbol{\alp}(2))$. Thus, $s_{ji}$ is a circuit polynomial and by \eqref{Equ:ProofSONCRepresenation2} the coefficients $a_{\boldsymbol{\alp}(1)}, a_{\boldsymbol{\alp}(2)}$ are chosen large enough such that $|a_{ji}|$ is bound by the circuit number $\sqrt{2 a_{\boldsymbol{\alp}(1)} a_{\boldsymbol{\alp}(2)}}$ corresponding to $s_{ji}$. Thus, $s_{ji}$ is nonnegative by \cite[Theorem 1.1]{Iliman:deWolff:Circuits}. Thus, we obtain 
$$a_{ji} m_{ji} g_j \ = \ s_{ji} g_j + (a_{\boldsymbol{\alp}(1)} \mathbf{x}^{\boldsymbol{\alp}(1)} + a_{\boldsymbol{\alp}(2)} \mathbf{x}^{\boldsymbol{\alp}(2)}) (-g_j),$$
where $s_{ji}$, $a_{\boldsymbol{\alp}(1)} \mathbf{x}^{\boldsymbol{\alp}(1)}$, and $a_{\boldsymbol{\alp}(2)} \mathbf{x}^{\boldsymbol{\alp}(2)}$ are nonnegative circuit polynomials.

\textbf{Degree:} All involved nonnegative circuit polynomials are of degree at most $2d$. In Case 1 this follows by construction. In Case 2 we have for the circuit polynomial $s_{ji}$ that $\deg(\boldsymbol{\alp}(1)),\deg(\boldsymbol{\alp}(2)) = \deg(\boldsymbol{\beta})$ if $k$ is even, and $\deg(\boldsymbol{\alp}(1)) = \deg(\boldsymbol{\beta}) +1$, $\deg(\boldsymbol{\alp}(2)) = \deg(\boldsymbol{\beta})$ if $k$ is odd. Since $\boldsymbol{\beta}$ is an exponent of the polynomial $f$, we know that $\deg(\boldsymbol{\beta}) \leq 2d$. If $k$ is odd, however, then 
$$\deg(\boldsymbol{\beta}) \ = \ \sum_{j = 1}^k \underbrace{\beta_j}_{\text{odd number}} + \sum_{j = k+1}^{n} \underbrace{\beta_j}_{\text{even number}},$$
i.e., $\deg(\boldsymbol{\beta})$ is a sum of $k$ many odd numbers, with $k$ being odd, plus a sum of even numbers. Thus, $\deg(\boldsymbol{\beta})$ has to be an odd number and hence $\deg(\boldsymbol{\beta}) < 2d$. Therefore, all degrees of terms in $s_{ji}$ are bounded by $2d$ and thus $s_{ji} \in C_{n,2d}$.

\textbf{Conclusion:} We have that 
$$f \ = \ \sum_{j = 1}^n p_j g_j \ = \ \sum_{j = 1}^n \sum_{i = 1}^{\ell_j} a_{ji} m_{ji} g_j \ = \ \sum_{j = 1}^n \sum_{i = 1}^{\ell_j} s_{ji} g_j.$$
By Cases 1 and 2 and the degree argument, we have $s_{ji} \in C_{n,2d}$ for every $i,j$ and by defining $s_j = \sum_{i = 1}^{\ell_j} s_{ji} \in C_{n,2d}$ we obtain the desired representation of $f$.
\end{proof}
% % % 
% % % \subsection{Proof of Theorem~\ref{thm:upper_bound_SONC_hypercube} (unconstrained case)}
% % % \label{sec:main_proof_unconstrained}
% % % 
% % % 
% % % First note that, by Lemma~\ref{lem:delta_v_values}, the function $\delta_\mathbf{v}$ has the property that $\delta_\mathbf{v}(\mathbf{v})=1$ and $\delta_\mathbf{v}(\mathbf{x})=0$ for every $\mathbf{x} \in \cH \setminus \mathbf{v}$. Thus, we can write
% % % $$
% % % f(\mathbf{x})\ = \ \sum_{\mathbf{v} \in \cH} \delta_\mathbf{\mathbf{v}}(\mathbf{x}) f(\mathbf{v}) \qquad \text{ for all } \mathbf{x} \in \cH \;.
% % % $$
% % % Moreover, since the polynomial $f(\mathbf{x})-\sum_{\mathbf{v} \in \cH} \delta_\mathbf{\mathbf{v}}(\mathbf{x}) f(\mathbf{v})$ has degree smaller or equal than $n$ and vanishes on $\cH$, we get by Theorem~\ref{thm:f_vanishing_on_H_is_SONC_2n} 
% % % $$
% % % f(\mathbf{x})-\sum_{\mathbf{v} \in \cH} \delta_\mathbf{\mathbf{v}}(\mathbf{x}) f(\mathbf{v}) \ = \ \sum_{j = 1}^n s_j(\mathbf{x}) g_j(\mathbf{x}) + \sum_{j = 1}^n s_{n + j}(\mathbf{x}) (-g_j(\mathbf{x})),
% % % $$
% % % for some  $s_1,\ldots,s_{2n} \in C_{n,n-2}$.
% % % Therefore, $f$ can be written as
% % % \begin{equation*}
% % % f(\mathbf{x}) \ = \ \sum_{j = 1}^n s_j(\mathbf{x}) g_j(\mathbf{x}) + \sum_{j = 1}^n s_{n + j}(\mathbf{x}) (-g_j(\mathbf{x}))   +\sum_{\mathbf{v} \in \cH}  f(\mathbf{v}) \delta_\mathbf{\mathbf{v}}(\mathbf{x})\;
% % % \end{equation*}
% % % for some $s_1,\ldots, s_{2n} \in C_{n,n-2}$, $f(\mathbf{v}) \in \mathbb{R}_{\geq 0}$ which finishes the proof.
% % % 
% % % \qed

\subsection{Proof of Theorem~\ref{thm:upper_bound_SONC_hypercube}}
\label{sec:main_proof_constrained}

In this section we combine the results of this section and finish the proof of Theorem~\ref{thm:upper_bound_SONC_hypercube}.

\medskip 

Due to Lemma \ref{Lemma:CertificateIsIndeedCertificate}, it remains to show that $f(\mathbf{x})$ admits a decomposition of the form \eqref{Equ:PolynomialNonnegativityCertificate} if $f(\mathbf{x}) \geq 0$ for every $\mathbf{x} \in \cH_\cP $.

Hence, when restricted to the hypercube $\cH$, the polynomial $f$ can be represented in the following way:
\begin{eqnarray*}
f(\mathbf{x})&=& f(\mathbf{x}) \sum_{\mathbf{v} \in \cH_\mathcal P} \delta_\mathbf{v}(\mathbf{x})  + f(\mathbf{x}) \sum_{\mathbf{v} \in \cH \setminus \cH_\mathcal P} \delta_\mathbf{v}(\mathbf{x})     \qquad \text{ for all } \mathbf{x} \in \cH \;\\
&=&  \sum_{\mathbf{v} \in \cH_\mathcal P} \delta_\mathbf{v}(\mathbf{x}) f(\mathbf{v})  +  \sum_{\mathbf{v} \in \cH \setminus \cH_\mathcal P} \delta_\mathbf{v}(\mathbf{x})f(\mathbf{v})      \qquad \text{ for all } \mathbf{x} \in \cH ,
\end{eqnarray*}
where the last equality follows by Lemma~\ref{lem:delta_v_values} .

Note that there might exist a vector $\mathbf{v} \in \cH \setminus \cH_\mathcal P$ such that $f$ attains a negative value at $\mathbf{v}$. If $f(\mathbf{v}) <0$, then let $p_\mathbf{v} \in \mathcal{P}$ be one of the polynomials among the constraints satisfying $p_\mathbf{v}(\mathbf{v}) <0$. Otherwise, let $p_\mathbf{v}=1$. Since by Lemma~\ref{lem:delta_v_values} we have $\delta_{\mathbf{v}}(\mathbf{x}) p_\mathbf{v}(\mathbf{x})  =   \delta_{\mathbf{v}}(\mathbf{x}) p_\mathbf{v}(\mathbf{v})$ for every $\mathbf{v}, \mathbf{x}  \in \cH$, we can now write:

\begin{eqnarray*}
f(\mathbf{x})&=&  \sum_{\mathbf{v} \in \cH_\mathcal P} \delta_\mathbf{v}(\mathbf{x}) f(\mathbf{v})  +  
\sum_{\mathbf{v} \in \cH \setminus \cH_\mathcal P} \delta_\mathbf{v}(\mathbf{x})p_\mathbf{v}(\mathbf{x}) \frac{f(\mathbf{v})}{p_{\mathbf{v}}(\mathbf{v})}        \qquad \text{ for all } \mathbf{x} \in \cH.
\end{eqnarray*}

Thus, the polynomial $f(\mathbf{x})- \sum_{\mathbf{v} \in \cH_\mathcal P} \delta_\mathbf{v}(\mathbf{x}) f(\mathbf{v}) - 
\sum_{\mathbf{v} \in \cH \setminus \cH_\mathcal P} \delta_\mathbf{v}(\mathbf{x})p_\mathbf{v}(\mathbf{x}) \frac{f(\mathbf{v})}{p_{\mathbf{v}}(\mathbf{v})} $ has degree at most $n+d$ and vanishes on $\cH$. By Theorem~\ref{thm:f_vanishing_on_H_is_SONC_2n} we finally get
%$$
%f(\mathbf{x})- \sum_{\mathbf{v} \in \cH_\mathcal P} \delta_\mathbf{v}(\mathbf{x}) f(\mathbf{v})  +  
%\sum_{\mathbf{v} \in \cH \setminus \cH_\mathcal P} \delta_\mathbf{v}(\mathbf{x})p(\mathbf{x}) \frac{f(\mathbf{v})}{p(\mathbf{v})}  = \sum_{j = 1}^n s_j(\mathbf{x}) g_j(\mathbf{x}) + \sum_{j = 1}^n s_{n + j}(\mathbf{x}) (-g_j(\mathbf{x}))
%$$
%for some  $s_1,\ldots,s_{2n} \in C_{n,n-2}$.
%Thus finally, $f$ can be written as
\begin{equation*}
f(\mathbf{x}) \ = \ \sum_{j = 1}^n s_j(\mathbf{x}) g_j(\mathbf{x}) + \sum_{j = 1}^n s_{n + j}(\mathbf{x}) (-g_j(\mathbf{x}))   + \sum_{\mathbf{v} \in \cH_\mathcal P} \delta_\mathbf{v}(\mathbf{x}) f(\mathbf{v})  +  
\sum_{\mathbf{v} \in \cH \setminus \cH_\mathcal P} \delta_\mathbf{v}(\mathbf{x})p_\mathbf{v}(\mathbf{x}) \frac{f(\mathbf{v})}{p_{\mathbf{v}}(\mathbf{v})} \;,
\end{equation*}
for some $s_1,\ldots, s_{2n} \in C_{n,n-2}$ and $p_\mathbf{v} \in \mathcal{P}$. This, together with Lemma \ref{lem:Kronecker_in_SONC_2n}, finishes proof together with.
\qed

\begin{cor}
	For every polynomial $f$ which is nonnegative over the boolean hypercube constrained with polynomial inequalities of degree at most $d$ there exists a degree $n+d$ SONC certificate.
\label{Cor:DegreeNplusDCertificate}
\end{cor}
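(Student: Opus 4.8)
The plan is to read the corollary directly off Theorem~\ref{thm:upper_bound_SONC_hypercube} together with Lemma~\ref{lem:Kronecker_in_SONC_2n}, by matching the explicit representation \eqref{Equ:PolynomialNonnegativityCertificate} against the notion of a degree $n+d$ SONC certificate from Definition~\ref{Def:degree_d_certificate}. Let $f$ be nonnegative on the constrained hypercube $\cH_{\mathcal{P}}$. Performing iterated polynomial division of $f$ by $g_1,\ldots,g_n$ produces a remainder of degree at most $n$ that agrees with $f$ on every vertex of $\cH$ (as already observed after Theorem~\ref{thm:upper_bound_SONC_hypercube}); replacing $f$ by this remainder we may assume $f \in \R[\mathbf{x}]_{n,n}$. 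Applying Theorem~\ref{thm:upper_bound_SONC_hypercube} then gives
$$f \ = \ \sum_{\mathbf{v}\in\cH_{\mathcal{P}}} c_\mathbf{v}\,\delta_\mathbf{v} \ + \ \sum_{\mathbf{v}\in\cH\setminus\cH_{\mathcal{P}}} c_\mathbf{v}\,\delta_\mathbf{v}\,p_\mathbf{v} \ + \ \sum_{j=1}^n s_j\, g_j \ + \ \sum_{j=1}^n s_{n+j}\,(-g_j),$$
with $c_\mathbf{v}\in\R_{\geq 0}$, $p_\mathbf{v}\in\mathcal{P}$ and $s_1,\ldots,s_{2n}\in C_{n,n-2}$.

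Second, I would rewrite every summand in the shape ``SONC times a product of at most $q$ of the constraint polynomials'' required by \eqref{Equ:ProductConstraint}. The terms $s_j g_j$ and $s_{n+j}(-g_j)$ already have this form with $q=1$, since $g_j$ and $-g_j$ both occur among the constraints defining $\cH_{\mathcal{P}}$ and $s_j, s_{n+j}$ are SONC. For the terms involving $\delta_\mathbf{v}$ I would invoke Lemma~\ref{lem:Kronecker_in_SONC_2n}, which writes $\delta_\mathbf{v} = \sum_k \sigma_k H_k^{(n)}$ with nonnegative constants $\sigma_k$ (hence SONC) and each $H_k^{(n)}$ a product of the $n$ linear box constraints $x_j-a_j\geq 0$, $b_j-x_j\geq 0$. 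Thus $c_\mathbf{v}\delta_\mathbf{v} = \sum_k (c_\mathbf{v}\sigma_k) H_k^{(n)}$ and $c_\mathbf{v}\delta_\mathbf{v}p_\mathbf{v} = \sum_k (c_\mathbf{v}\sigma_k) H_k^{(n)} p_\mathbf{v}$, where $H_k^{(n)} p_\mathbf{v}$ is a product of $n+1$ constraints. Padding the shorter products with factors equal to $1$, all summands take the form $s\cdot H^{(n+1)}$ with $s$ a SONC, so the single value $q=n+1$ works throughout.

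Finally, the degree count closes the argument: $\deg(c_\mathbf{v}\delta_\mathbf{v}) = n$, $\deg(s_j(\pm g_j)) \leq (n-2)+2 = n$, and $\deg(c_\mathbf{v}\delta_\mathbf{v}p_\mathbf{v}) \leq n + \deg(p_\mathbf{v}) \leq n+d$, so the whole right-hand side has degree at most $n+d$. By Definition~\ref{Def:degree_d_certificate} this exhibits a degree $n+d$ SONC certificate for $f$. I do not expect a genuine obstacle: the corollary is essentially a repackaging of Theorem~\ref{thm:upper_bound_SONC_hypercube}, and the only points needing care are ensuring one uniform exponent $q$ across all summands (handled by padding with $1$'s), checking that the linear factors of $\delta_\mathbf{v}$ legitimately count as constraints on $\cH_{\mathcal{P}}$ (they do, since $a_j\leq x_j\leq b_j$ holds on $\cH$, as already used in Lemma~\ref{lem:Kronecker_in_SONC_2n}), and the routine preliminary degree reduction modulo $\langle g_1,\ldots,g_n\rangle$.
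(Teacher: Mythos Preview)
Your proposal is correct and follows essentially the same approach as the paper: the paper's own proof is a one-line remark that the right-hand side of \eqref{Equ:PolynomialNonnegativityCertificate} from Theorem~\ref{thm:upper_bound_SONC_hypercube} is already a degree $n+d$ SONC certificate in the sense of Definition~\ref{Def:degree_d_certificate}. You simply spell out in detail what the paper leaves implicit---the preliminary degree reduction modulo the $g_j$, the use of Lemma~\ref{lem:Kronecker_in_SONC_2n} to recast the Kronecker deltas as products of box constraints, the padding to a uniform $q$, and the degree bookkeeping---none of which goes beyond the paper's intended argument.
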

\begin{proof}
	The argument follows directly from Theorem~\ref{thm:upper_bound_SONC_hypercube} by noting that the right hand side of~\eqref{Equ:PolynomialNonnegativityCertificate} is a SONC certificate of degree $n+d$ (see the Definition~\ref{Def:degree_d_certificate}).
\end{proof}

\subsection{Degree $d$ SONC Certificates}
\label{sec:degree_d_certificate}
In this section we show that if a polynomial $f$ admits a degree $d$ SONC certificate, then $f$ also admits a short degree $d$ certificate that involves at most $n^{O(d)}$ terms. We conclude the section with a discussion regarding the time complexity of finding a degree $d$ SONC certificate.

\begin{thm}
	\label{thm:degree_d_certificate}
	Let $f$ be an $n$-variate polynomial, nonnegative on the constrained hypercube $\cH_\cP$ with $|\cP|=\poly(n)$. Assume that there exists a degree $d$ SONC certificate for $f$, then there exists a degree $d$ SONC certificate for $f$ involving at most $O(\binom{n}{\leq d})$ many nonnegative circuit polynomials. 
\end{thm}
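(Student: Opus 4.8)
The plan is to argue via a dimension count on the linear span of the degree-$d$ SONC components, combined with Carathéodory's theorem applied to a suitable convex cone. Suppose $f = \sum_{\text{finite}} s(\mathbf{x}) H^{(q)}(\mathbf{x})$ is a degree $d$ SONC certificate, where each $s$ is a nonnegative circuit polynomial (after expanding each SONC summand $s = \sum_i \mu_i p_i$ into its individual nonnegative circuit polynomials $p_i$, which does not increase the degree). Group the terms by which product $H^{(q)}$ they are multiplied with: since $|\cP| = \poly(n)$ and each $H^{(q)}$ is a product of at most $q$ of the $\poly(n)$ many constraint polynomials (together with the $2n$ box constraints), and since we only care about representations of degree at most $d$, the number of distinct products $H^{(q)}$ that can occur with a nonzero coefficient without violating the degree bound is itself $n^{O(d)}$. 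Hence it suffices to bound, for each fixed product $H := H^{(q)}$, the number of nonnegative circuit polynomials $p$ needed in the partial sum $\sum_p \lambda_p \, p \cdot H$ whose total contribution is some fixed polynomial $r_H$ of degree at most $d$.

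The key step is then: given that a polynomial $r$ of degree at most $d' \le d$ lies in the cone generated by nonnegative circuit polynomials of degree at most $d'$ (i.e. $r \in C_{n,d'}$), one can write $r$ as a sum of at most $\dim \R[\mathbf{x}]_{n,d'} = \binom{n + d'}{d'} = O\!\left(\binom{n}{\le d'}\right)$ nonnegative circuit polynomials. This follows from Carathéodory's theorem for convex cones: the set $C_{n,d'}$ is a convex cone inside the finite-dimensional vector space $\R[\mathbf{x}]_{n,d'}$, which has dimension $\binom{n+d'}{d'}$; any point of a convex cone in an $N$-dimensional space is a nonnegative combination of at most $N$ of its extreme rays, and the nonnegative circuit polynomials span $C_{n,d'}$ by Definition~\ref{Def:SONC}, so in fact at most $N$ nonnegative circuit polynomials suffice. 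Summing the bound $O(\binom{n}{\le d})$ over the $n^{O(d)}$ relevant products $H$ still yields a total of $n^{O(d)}$ circuit polynomials; to land exactly at $O(\binom{n}{\le d})$ as stated one observes that in the certificate of Theorem~\ref{thm:upper_bound_SONC_hypercube} only the $2n$ products $\pm g_j$ appear together with the Kronecker-delta terms, so the number of relevant $H$'s is actually $O(n)$, and one can reorganize the certificate so that the SONC parts attached to each $g_j$ are treated jointly, giving the claimed $O(\binom{n}{\le d})$.

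The main obstacle I expect is bookkeeping around the products $H^{(q)}$: one must be careful that restricting to degree-$\le d$ representations genuinely limits both which products can appear and how many of them, and that expanding SONC summands into circuit polynomials and then reapplying Carathéodory does not secretly blow up the count through the multiplicities $\mu_i$. A clean way to handle this is to fix, once and for all, the finite set $\mathcal{H}$ of products $H^{(q)}$ occurring in the given certificate, work inside the finite-dimensional space spanned by $\{ p \cdot H : H \in \mathcal{H}, \ p \text{ a nonnegative circuit polynomial of appropriate degree}\} \subseteq \R[\mathbf{x}]_{n,d}$, and apply Carathéodory there directly, noting that this ambient space has dimension at most $\binom{n+d}{d} = O(\binom{n}{\le d})$ since it sits inside $\R[\mathbf{x}]_{n,d}$. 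This sidesteps the need to bound $|\mathcal{H}|$ separately and gives the bound $O(\binom{n}{\le d})$ in one stroke.
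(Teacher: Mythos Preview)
Your core approach---Carath\'eodory's theorem applied inside the finite-dimensional space $\R[\mathbf{x}]_{n,d}$---is exactly what the paper does. The paper's proof proceeds in the two-step form you describe first: for each fixed $H_j^{(q)}$ it applies Carath\'eodory to the SONC factor $s_j$ inside $\R[\mathbf{x}]_{n,d}$ to reduce to $\binom{n+d}{d}+1$ circuit polynomials, then bounds the number of distinct products $H_j^{(q)}$ by $\binom{3n+m}{d}$, and multiplies. Both the paper and you arrive at $n^{O(d)}$ this way; the paper does not actually sharpen this to $O(\binom{n}{\le d})$ either, despite the phrasing of the statement, so you should not worry about that discrepancy.

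Your middle attempt to reach $O(\binom{n}{\le d})$ by invoking the special structure of the certificate in Theorem~\ref{thm:upper_bound_SONC_hypercube} is a misstep: the hypothesis here is an \emph{arbitrary} degree $d$ SONC certificate, not the particular one constructed there, so you cannot assume only $O(n)$ products $H$ appear. You correctly abandon this in your final paragraph.

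Your final ``one-shot'' variant---applying Carath\'eodory directly to $f$ inside the cone generated by all products $p\cdot H$ sitting in $\R[\mathbf{x}]_{n,d}$---is a genuine and valid simplification over the paper's two-step argument. It yields at most $\binom{n+d}{d}$ terms of the form $p\cdot H$ in a single stroke, avoiding the separate count of the $H$'s; regrouping by $H$ afterwards recovers a bona fide SONC certificate. The paper does not take this shortcut. The only caveat is the same one the paper glosses over: you are tacitly assuming that in the given certificate each individual product $p_{ij}\cdot H_j$ already has degree at most $d$, i.e.\ that no cancellation of higher-degree terms occurs across summands. This is the standard reading of ``degree $d$ certificate'' and the paper assumes it too, so it is not a defect of your argument relative to theirs.
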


\begin{proof}
	Since there exists a degree $d$ SONC proof of the nonnegativity of $f$ on $\cH_\cP$ we know that
	$$
	f(\mathbf{x}) \ = \ \sum_{j}s_j H_j^{(q)},
	$$
	where the summation is finite, the $s_j$'s are SONCs, and every $H_j^{(q)}$ is a product as defined in \eqref{Equ:ProductConstraint}.
	
	\textbf{Step 1:} We analyze the terms $s_j$. Since every $s_j$ is a SONC, we know that there exists a representation
	$$s_j \ = \ \kappa_j \cdot \sum_{i = 1}^{k_j} \mu_{ij} \cdot q_{ij}$$
	such that $\kappa_j, \mu_{1j},\ldots, \mu_{k_jj} \in \R_{> 0}$, $\sum_{i = 1}^{k_j} \mu_{ij} = 1$, and the $q_{ij}$ are nonnegative circuit polynomials. Since $s_j$ is of degree at most $d$, we know that $\struc{Q_j }:= \{q_{1j},\ldots,q_{k_jj}\}$ is contained in $\R_{n,d}[\mathbf{x}]$, which is a real vector space of dimension $\binom{n+d}{d}$. Since $s_j / \kappa_j$ is a convex combination of the $q_{ij}$, i.e. in the convex hull of $Q_j$, and $\dim(Q_j) \leq \binom{n+d}{d}$, applying Carath\'{e}odory's Theorem, see e.g.  \cite{Ziegler}, yields that $s_j/\kappa_j$ can be written as a convex combination of at most $\binom{n+d}{d} + 1$ many of the $c_{ij}$.
	
	\textbf{Step 2:} We analyze the terms $H_j^{(q)}$. By definition of the $\cH_\cP$ and the terms $H_j^{(q)}$ we have
	$$H_j^{(q)} \ = \ g_{j_1} \cdots g_{j_s} \cdot l_{r_1} \cdots l_{r_t} \cdot p_{\ell_1} \cdot p_{\ell_v}$$
	with $j_1,\ldots,j_s \in [n]$, $r_1,\ldots,r_t \in [2n]$, and $\ell_1,\ldots,\ell_v \in [m]$. Since the maximal degree of $H_j^{(q)}$ is $d$, the number of different $H_j^{(q)}$'s is bounded from above by $\binom{n+2n+m}{d}$. 
	
	\textbf{Conclusion:} In summary, we obtain a representation:
	\begin{eqnarray*}
	f(\mathbf{x}) \ = \ \sum_{i=1}^{\binom{n+2n+m}{d}} H_j^{(q)} s_j \ = \ \sum_{i=1}^{\binom{n+2n+m}{d}} H_j^{(q)}  \kappa_j \sum_{j = 1}^{\binom{n+d}{d} +1} \mu_{ij} c_{ij}
	\end{eqnarray*}
	
	Since, as assumed $m$ can bounded by $\poly(n)$, the total number of summands is $\poly(n)^{O(d)}=n^{O(d)}$, and we found a desired representation with at most $n^{O(d)}$ nonnegative circuit polynomials of degree at most $d$.
\end{proof}

The Theorem~\ref{thm:degree_d_certificate} states that when searching for a degree $d$ SONC certificate it is enough to restrict to certificates containing at most $n^{O(d)}$ nonnegative circuit polynomials. Moreover, as proved in~\cite[Theorem 3.2]{Dressler:Iliman:deWolff:Positivstellensatz} for a given set $A \subseteq \mathbb{N}^n$, searching through the space of degree $d$ SONC certificates supported on set $A$ can be computed via a relative entropy program (REP) of size $n^{O(d)}$. 
However, the above arguments do \textit{not} necessarily imply that that the search through the space of degree $d$ SONC certificates can be performed in time $n^{O(d)}$. The difficulty is that one needs to restrict the configuration space of $n$-variate degree $d$ SONCs to a subset of order $n^{O(d)}$ to be able to formulate the corresponding REP in time $n^{O(d)}$. Since the current proof of Theorem \ref{thm:degree_d_certificate} just guarantees the \textit{existence} of a short SONC certificate, it is currently not clear, how to search for a short certificate efficiently. We leave this as an open problem.

\section{There Exists No Equivalent to Putinar's Positivstellensatz for SONCs}
\label{Sec:PutinarPositivstellensatz}
In this section we address the open problem raised in~\cite{Dressler:Iliman:deWolff:Positivstellensatz} asking whether the Theorem~\ref{thm:SONC_Positivstellensatz} can be strengthened by requiring $q=1$. Such a strengthening, for a positive polynomial over some basic closed semialgebraic set, would provide a SONC decomposition equivalent to Putinar's Positivstellensatz for SOS; for background see e.g., \cite{Laurent:Survey,Putinar:Positivstellensatz}. We answer this question in a negative way. More precisely, we provide a polynomial $f$ which is strictly positive over the hypercube $\{\pm 1\}^n$ such that there does not exist a SONC decomposition of $f$ for $q=1$. Moreover, we prove it not only for the most natural choice of the box constraints, that is $l_i =1 \pm x_i$, but for a generic type of box constraints of the form $\ell_i =1+c_i \pm x_i$ with $c_i \in \mathbb{R}_{\geq 0}$. We close the section with a short discussion.

\medskip

Let $\cH=\{\pm 1\}^n$ and consider the following set of polynomials parametrized by a natural number $a$:
$$
\struc{f_a(\mathbf{x})} \ := \ (a-1)\prod_{i=1}^n \left(  \frac{x_i +1}{2}  \right) +1.
$$
These polynomials satisfy $f_a(\mathbf{e}) = a$ for the vector $\mathbf{e}=\sum_{i=1}^n \mathbf{e}_i$ and $f_a(\mathbf{x}) = 1$ for every other $\mathbf{x} \in \cH \setminus \{ \mathbf{e} \}$. We define for every $d \in \N$
$$\struc{S_d} \ := \ \left\{ \sum_{\rm finite} s \cdot h \ : \ s \in  C_{n,2d},~ h \in \left\{1,\pm(x_i^2 -1), 1+c_i \pm x_i \ : \ i \in[n], c_i\in \mathbb{R}_{\geq 0}  \right\}  \right\}$$ 

as the set of polynomials admitting a SONC decomposition over $\cH$ given by Theorem~\ref{thm:SONC_Positivstellensatz} for $q=1$. The main result of this section is the following theorem.
\begin{thm}
	\label{thm:counterexample}
	For every $a > \frac{2^n-1}{2^{n-2}-1}$ we have $f_a \notin S_d$ for all $d \in \N$.
\end{thm}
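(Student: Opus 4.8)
The plan is to show that a hypothetical representation $f_a=\sum_{\mathrm{finite}} s\cdot h$ with $s\in C_{n,2d}$ forces $a\le \tfrac{2^n-1}{2^{n-2}-1}$; this is an averaging argument over $\cH=\{\pm1\}^n$ resting on a rigidity property of circuit polynomials restricted to $\{\pm1\}$-faces. First I would reduce to an identity on $\cH$. Suppose $f_a\in S_d$ and group the summands $s\cdot h$ by the type of $h$. A term $s\cdot(1+c_i+x_i)$ rewrites as $s(1+x_i)+c_i s$, and $c_is$ is again a SONC since $c_i\ge 0$; moving it into the $h=1$ block (and likewise for $s\cdot(1+c_i-x_i)$) one may assume all $c_i=0$, so
\[
f_a=\sigma_0+\sum_{i=1}^n A_i\,(x_i^2-1)+\sum_{i=1}^n B_i\,(1-x_i^2)+\sum_{i=1}^n C_i\,(1+x_i)+\sum_{i=1}^n D_i\,(1-x_i)
\]
with $\sigma_0,A_i,B_i,C_i,D_i$ SONCs. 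Restricting this polynomial identity to $\cH$, where $x_i^2-1$ vanishes, gives
\[
f_a(\mathbf v)=\sigma_0(\mathbf v)+\sum_{i=1}^n C_i(\mathbf v)(1+v_i)+\sum_{i=1}^n D_i(\mathbf v)(1-v_i)\qquad(\mathbf v\in\cH);
\]
we refer to this relation as $(\star)$.

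The key step is a rigidity lemma: \emph{if $s$ is a SONC and $F$ is a face of $\cH$ (the subset obtained by fixing some coordinates to $\pm1$), then $s(\mathbf w)\le 2\,\mathrm{avg}_F\, s$ for every vertex $\mathbf w$ of $F$}, where $\mathrm{avg}_F$ denotes the mean over the vertices of $F$. Since $\mathrm{avg}_F$ is linear it suffices to treat a single nonnegative circuit polynomial $p$ with outer terms $p_{\boldsymbol\alpha(0)}\mathbf x^{\boldsymbol\alpha(0)},\dots,p_{\boldsymbol\alpha(r)}\mathbf x^{\boldsymbol\alpha(r)}$ and inner term $p_{\boldsymbol\beta}\mathbf x^{\boldsymbol\beta}$. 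On $F$ every outer monomial is identically $1$ (its exponent is even), so $p|_F=S+p_{\boldsymbol\beta}\,\mathbf x^{\boldsymbol\beta}|_F$ with $S=\sum_j p_{\boldsymbol\alpha(j)}>0$; moreover $\mathbf x^{\boldsymbol\beta}|_F$ is $\pm1$ times either the constant $1$ or a nontrivial $\{\pm1\}$-character of $F$, according to whether $\boldsymbol\beta$ is even on the free coordinates. In the first case $p|_F$ is constant, equal to $\mathrm{avg}_F p\ge0$, so the inequality is clear. In the second case $\boldsymbol\beta\notin(2\N)^n$, so Theorem~\ref{Thm:CircuitPolynomialNonnegativity} yields $|p_{\boldsymbol\beta}|\le\Theta_p$, while the weighted AM--GM inequality gives $\Theta_p=\prod_j(p_{\boldsymbol\alpha(j)}/\lambda_j)^{\lambda_j}\le\sum_j p_{\boldsymbol\alpha(j)}=S$; since $\mathrm{avg}_F(\mathbf x^{\boldsymbol\beta}|_F)=0$ we get $\mathrm{avg}_F p=S$ and hence $p(\mathbf w)=S+p_{\boldsymbol\beta}\mathbf w^{\boldsymbol\beta}\le S+|p_{\boldsymbol\beta}|\le 2S=2\,\mathrm{avg}_F p$. (A monomial square falls under the first case.)

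Now for the averaging estimate. Evaluate $(\star)$ at $\mathbf v=\mathbf e$: all coordinates equal $1$ and $f_a(\mathbf e)=a$, so $a=\sigma_0(\mathbf e)+2\sum_{i=1}^n C_i(\mathbf e)$. Apply the lemma to $\sigma_0$ on $F=\cH$ and to each $C_i$ on the facet $\cH_i^+:=\{\mathbf v\in\cH:v_i=1\}$ (which contains $\mathbf e$):
\[
a\ \le\ 2\,\mathrm{avg}_{\cH}\sigma_0+4\sum_{i=1}^n\mathrm{avg}_{\cH_i^+}C_i\ \le\ 4\Bigl(\mathrm{avg}_{\cH}\sigma_0+\sum_{i=1}^n\mathrm{avg}_{\cH_i^+}C_i\Bigr).
\]
On the other hand, averaging $(\star)$ over $\cH$ and noting that $C_i(\mathbf v)(1+v_i)$ equals $2C_i(\mathbf v)$ on $\cH_i^+$ and $0$ elsewhere (so its $\cH$-average is $\mathrm{avg}_{\cH_i^+}C_i$), and similarly that the $\cH$-average of $D_i(\mathbf v)(1-v_i)$ is $\mathrm{avg}_{\cH_i^-}D_i\ge 0$,
\[
\mathrm{avg}_{\cH}f_a=\mathrm{avg}_{\cH}\sigma_0+\sum_{i=1}^n\mathrm{avg}_{\cH_i^+}C_i+\sum_{i=1}^n\mathrm{avg}_{\cH_i^-}D_i\ \ge\ \mathrm{avg}_{\cH}\sigma_0+\sum_{i=1}^n\mathrm{avg}_{\cH_i^+}C_i.
\]
Hence $a\le 4\,\mathrm{avg}_{\cH}f_a$. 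Since $f_a$ equals $a$ at $\mathbf e$ and $1$ at each of the other $2^n-1$ vertices, $\mathrm{avg}_{\cH}f_a=(a+2^n-1)/2^n$, so $2^n a\le 4a+4(2^n-1)$, i.e.\ $a\le \frac{4(2^n-1)}{2^n-4}=\frac{2^n-1}{2^{n-2}-1}$ (valid since $2^n-4=4(2^{n-2}-1)>0$ for $n\ge3$). This contradicts $a>\frac{2^n-1}{2^{n-2}-1}$, so $f_a\notin S_d$ for every $d$.

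The step I expect to be the real content is the rigidity lemma: one has to notice that the evenness of the outer exponents pins a nonnegative circuit polynomial to the shape ``constant $+$ one character of amplitude at most $\Theta_p$'' on every $\{\pm1\}$-face, and that $\Theta_p$ is bounded by the sum of the outer coefficients. Everything after that is bookkeeping; the only delicate point is to average $(\star)$ against precisely the facets $\cH_i^+$ that match the factors $1+x_i$, which is what makes the constants telescope into exactly the claimed threshold.
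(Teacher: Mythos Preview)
Your proof is correct and follows essentially the same averaging strategy as the paper: both compare $f_a(\mathbf e)=a$ with $\sum_{\mathbf x\in\cH}f_a(\mathbf x)=2^n-1+a$ via the observation that a nonnegative circuit polynomial restricted to $\{\pm1\}^n$ has the form ``constant plus a single character of bounded amplitude''. Your rigidity lemma (point value $\le 2\cdot$face average, proved via $\Theta_p\le\sum_j p_{\boldsymbol\alpha(j)}$ from AM--GM) is a cleaner repackaging of the paper's Lemmas on value distributions, and your preliminary reduction to $c_i=0$ together with the split into facets $\cH_i^{\pm}$ is a tidy way to handle the box constraints, but the core mechanism and the resulting bound are identical.
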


Before we prove this theorem, we show the following structural results. Note that similar structural observations were already made for AGIforms by Reznick in \cite{Reznick:AGI} using a different notation.
\begin{lemma}
	\label{lem:counterexample_2_values}
	Every $s(\mathbf{x}) \in C_{n,2d}$ attains at most two different values on $\cH =\{\pm 1\}^n$. Moreover, if $s(\mathbf{x})$ attains two different values, then each value is attained for exactly the half of the hypercube vertices. 
\end{lemma}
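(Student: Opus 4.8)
The statement is essentially forced by condition (C1) of Definition~\ref{Def:CircuitPolynomial}. I would first reduce to the case that $s$ is a single nonnegative circuit polynomial; this is all that is needed in what follows, since every SONC entering a candidate decomposition is a nonnegative combination of nonnegative circuit polynomials. So write $s = \sum_{j=0}^{r} f_{\boldsymbol{\alp}(j)}\mathbf{x}^{\boldsymbol{\alp}(j)} + f_{\boldsymbol{\beta}}\mathbf{x}^{\boldsymbol{\beta}}$ as in \eqref{Equ:CircuitPolynomial} (a monomial square being the degenerate case with no inner term). Observe that $\mathbf{x}^{\boldsymbol{\gamma}}\equiv 1$ on $\cH=\{\pm1\}^n$ for every $\boldsymbol{\gamma}\in(2\N)^n$; by (C1) this applies to each outer exponent $\boldsymbol{\alp}(j)$, so restricting $s$ to $\cH$ collapses all outer terms and yields $s|_{\cH}=C+f_{\boldsymbol{\beta}}\,\mathbf{x}^{\boldsymbol{\beta}}$ with the fixed constant $C:=\sum_{j=0}^{r} f_{\boldsymbol{\alp}(j)}\in\R_{>0}$.

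Next I would replace $\mathbf{x}^{\boldsymbol{\beta}}$ on $\cH$ by a character: since $x_i^2=1$ on $\cH$, the monomial $\mathbf{x}^{\boldsymbol{\beta}}$ restricted to $\cH$ depends only on the parities of the $\beta_i$ and equals $\chi_T(\mathbf{x}):=\prod_{i\in T}x_i$ with $T:=\{i\in[n]:\beta_i\ \text{odd}\}$. If $T=\emptyset$ then $\chi_T\equiv 1$, so $s|_{\cH}\equiv C+f_{\boldsymbol{\beta}}$ is constant; if $T\neq\emptyset$ then $\chi_T$ takes only the values $\pm1$, so $s$ attains on $\cH$ at most the two values $C+f_{\boldsymbol{\beta}}$ and $C-f_{\boldsymbol{\beta}}$ (which coincide precisely when $f_{\boldsymbol{\beta}}=0$). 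In all cases $s$ takes at most two values on $\cH$, which is the first assertion.

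For the refinement, assume $s$ attains two distinct values, so $T\neq\emptyset$ and $f_{\boldsymbol{\beta}}\neq0$. Identifying $\cH$ with $(\Z/2\Z)^n$, the map $\chi_T$ is a group homomorphism onto $\{\pm1\}$: it is onto because flipping the sign of any single coordinate $x_i$ with $i\in T$ flips the sign of $\chi_T$. Hence $\ker\chi_T=\{\mathbf{x}\in\cH:\chi_T(\mathbf{x})=1\}$ has index $2$, i.e.\ exactly $2^{n-1}$ elements, and so does its nontrivial coset. Under $s|_{\cH}=C+f_{\boldsymbol{\beta}}\chi_T$ this means $C+f_{\boldsymbol{\beta}}$ is attained on $2^{n-1}$ vertices and $C-f_{\boldsymbol{\beta}}$ on the remaining $2^{n-1}$, i.e.\ each value on exactly half of $\cH$.

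I do not foresee a genuine difficulty: once (C1) is used to kill the outer terms on $\cH$, the rest is elementary character theory on $(\Z/2\Z)^n$. The one point calling for a little care is the initial reduction to a single nonnegative circuit polynomial. As a consistency check (not needed for the proof) one may also note that weighted AM--GM together with Theorem~\ref{Thm:CircuitPolynomialNonnegativity} gives $C\ge\Theta_s\ge|f_{\boldsymbol{\beta}}|$, so the two values $C\pm f_{\boldsymbol{\beta}}$ are nonnegative, as they must be for $s\in P_{n,2d}$.
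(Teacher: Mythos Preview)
Your proof is correct and follows essentially the same line as the paper's: both use (C1) to see that the outer terms are constant on $\{\pm1\}^n$, leaving $s|_{\cH}=C\pm f_{\boldsymbol{\beta}}$, and then argue the equipartition; your kernel/character formulation is just a tidier version of the paper's direct parity count. You are also right to flag that the ``reduction'' to a single circuit polynomial is not a reduction at all---the statement as written for arbitrary $s\in C_{n,2d}$ is in fact false (e.g.\ $(1-x_1)^2+(1-x_2)^2$ takes three values on $\{\pm1\}^2$); the paper's own proof, like yours, only establishes and only needs the single-circuit case.
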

\begin{proof}
	By Definition~\ref{Def:CircuitPolynomial} every nonnegative circuit polynomial is of the form:
	$$
	s(\mathbf{x}) \ = \ \sum_{j=0}^r f_{\boldsymbol{\alp}(j)} \mathbf{x}^{\boldsymbol{\alp}(j)} + f_{\boldsymbol{\beta}} \mathbf{x}^{\boldsymbol{\beta}}.	
$$
Note that for $j=0,\ldots,r$, we have $\boldsymbol{\alp}(j) \in (2\N)^n$. Hence when evaluated over the hypercube $\mathbf{x} \in \cH =\{\pm 1\}^n$, $s(\mathbf{x})$ can take only one of at most two different values $\sum_{j=0}^rf_{\boldsymbol{\alp}(j)} \pm f_{\boldsymbol{\beta}}$. 

If $s(\mathbf{x})$ attains two different values over $\cH$, then there has to exist a non empty subset of variables that have an odd entry in $\boldsymbol{\beta}$. Let $I\subseteq[n]$ be this subset. Then $s(\mathbf{x})=\sum_{j=0}^rf_{\boldsymbol{\alp}(j)}(\mathbf{x}) - f_{\boldsymbol{\beta}}(\mathbf{x})$, for $\mathbf{x} \in \cH$ if and only if $\mathbf{x}$ has an odd number of $-1$ entries in the set $I$. The number of such vectors is equal to
$$
2^{n-|I|}\sum_{\substack{i=0, \\ i \text{ odd}}}^{|I|} 2^i \ = \ 2^{n-|I|} 2^{|I|-1} \ = \ 2^{n-1}. 
$$
\end{proof}

\begin{lemma}
	\label{lem:counterexample_4_values}
	Every polynomial $s(\mathbf{x})\ell_i(\mathbf{x})$, with $s \in C_{n,2d}$ and $\ell_i = 1 + c_i \pm x_i$ being a box constraint, attains at most four different values on $\cH =\{\pm 1\}^n$. Moreover, each value is attained for at least one forth of the hypercube vertices. 
\end{lemma}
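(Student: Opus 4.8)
The plan is to reduce the claim to Lemma~\ref{lem:counterexample_2_values} by examining how the single linear factor $\ell_i(\mathbf{x}) = 1+c_i\pm x_i$ behaves on $\cH = \{\pm 1\}^n$. On the hypercube the variable $x_i$ takes only the values $\pm 1$, so $\ell_i$ takes exactly two values, namely $2+c_i$ and $c_i$ (in some order, depending on the sign), each on exactly half of the vertices of $\cH$ — specifically on the two affine subcubes $\{x_i = 1\}$ and $\{x_i = -1\}$, which partition $\cH$ into two sets of size $2^{n-1}$.

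First I would recall from Lemma~\ref{lem:counterexample_2_values} that $s(\mathbf{x})$ attains at most two values $u_0 := \sum_j f_{\boldsymbol{\alp}(j)} + f_{\boldsymbol\beta}$ and $u_1 := \sum_j f_{\boldsymbol{\alp}(j)} - f_{\boldsymbol\beta}$ on $\cH$, and that (when the two values are distinct) each is attained on exactly half of $\cH$, the partition being determined by the parity of the number of $(-1)$-entries among the coordinates in some nonempty index set $I\subseteq[n]$. Next I would observe that a SONC is a nonnegative combination of such circuit polynomials; here, however, the lemma statement is about a \emph{single} $s\in C_{n,2d}$, so in fact $s$ need not be a single circuit polynomial but a sum of several. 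I should be careful: a general element of $C_{n,2d}$ can attain more than two values on $\cH$. So the right reading is that the lemma is really applied in the proof of Theorem~\ref{thm:counterexample} to the \emph{individual nonnegative circuit polynomials} $q_{ij}$ appearing in a SONC decomposition; I would therefore state and prove the result for $s$ a nonnegative circuit polynomial, which is the form used above in Lemma~\ref{lem:counterexample_2_values}'s conclusion and suffices for the counterexample argument. (Alternatively, one reads ``$s\in C_{n,2d}$'' here as shorthand for ``$s$ a nonnegative circuit polynomial''; I would match whichever convention the surrounding text uses.)

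With that in place, the core argument is a simple product/partition count. The product $s(\mathbf{x})\ell_i(\mathbf{x})$ is constant on each cell of the common refinement of the two partitions: the $s$-partition $\{P_0, P_1\}$ induced by the parity in $I$, and the $\ell_i$-partition $\{x_i = +1\}, \{x_i = -1\}$. The refinement has at most $2\times 2 = 4$ cells, and on each cell $s\ell_i$ takes a single value among $\{u_0(2+c_i), u_0 c_i, u_1(2+c_i), u_1 c_i\}$; hence at most four distinct values. For the size bound, I would note that both partitions are ``balanced'' (each block has size $2^{n-1}$): for the $s$-partition this is Lemma~\ref{lem:counterexample_2_values}, and for the $\ell_i$-partition it is the obvious split by the value of $x_i$. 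The key combinatorial point is that these two balanced partitions are ``independent'': fixing $x_i$ and fixing the parity of $(-1)$'s in $I$ are independent events on $\cH$ — if $i\in I$ this still holds by a standard parity-counting argument (conditioning on $x_i$, the parity in $I\setminus\{i\}$ is still balanced among the remaining $2^{n-1}$ vertices, so each cell has size $2^{n-2}$), and if $i\notin I$ it is immediate. So every nonempty cell has size exactly $2^{n-2} \ge \tfrac14 |\cH|$, giving the ``at least one fourth'' claim. The only subtlety is the degenerate cases — $c_i = 0$ (then $\ell_i$ takes the value $0$ on a half-cube, collapsing two cells' worth of values into $0$, so even fewer than four values), $u_0 = u_1$ (i.e. $s$ constant on $\cH$), or $I = \emptyset$ — but in each degenerate case the number of distinct values only drops while the cell sizes only grow or merge, so the bounds are preserved a fortiori.

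The main obstacle I anticipate is purely expository rather than mathematical: pinning down the precise meaning of ``$s\in C_{n,2d}$'' in the statement (single circuit polynomial vs. genuine SONC) so that the four-value bound is actually true as stated, and making the independence-of-partitions count clean when $i \in I$. Once the statement is read as ``$s$ a nonnegative circuit polynomial'' (matching Lemma~\ref{lem:counterexample_2_values}), the proof is a two-line refinement-of-partitions argument with an elementary parity count for the block sizes.
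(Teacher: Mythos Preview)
Your proposal is correct and follows essentially the same approach as the paper: both reduce to Lemma~\ref{lem:counterexample_2_values}, observe that $\ell_i$ takes two values each on half of $\cH$, and then do the same case split ($I=\emptyset$, $i\notin I$, $i\in I$) with the identical parity count yielding cells of size $2^{n-2}$. Your observation that the statement really concerns a single nonnegative circuit polynomial rather than an arbitrary element of $C_{n,2d}$ is apt---the paper's proof (and its use in Theorem~\ref{thm:counterexample}, where the SONCs are first decomposed into circuit summands $s_{0,k}$, $s_{i,j}$) tacitly makes the same reading.
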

\begin{proof}
	By Lemma~\ref{lem:counterexample_2_values}, $s(\mathbf{x})$ attains at most the two values $\left(\sum_{j=0}^rf_{{\boldsymbol{\alpha}}(j)} \pm f_{\boldsymbol{\beta}}\right)$ on $\cH$. Similarly, $\ell_i(\mathbf{x})$ attains at most the two values $1+c_i \pm x_i $ over $\cH$. Thus, a polynomial $s(\mathbf{x}) \ell_i(\mathbf{x})$ attains at most the four different values $\left(\sum_{j=0}^rf_{{\boldsymbol{\alpha}}(j)} \pm f_{\boldsymbol{\beta}}\right) \left( 1+c_i \pm x_i  \right)$ on $\cH$.
	
	Let $I$ be as in the proof of Lemma~\ref{lem:counterexample_2_values}, i.e., the subset of variables that have an odd entry in $\boldsymbol{\beta}$. If $I=\emptyset$, then the first term $\sum_{j=0}^rf_{{\boldsymbol{\alpha}}(j)} + f_{\boldsymbol{\beta}}$ is constant over the hypercube $\cH$, thus $s(\mathbf{x})\ell_i(\mathbf{x})$ takes two different values depending on the $i$-th entry of the vector. Each value is attained for exactly half of the vectors.
	
	If $I\neq \emptyset$ and $i \notin I$ the claim holds since the value of the first term depends only on the entries in $I$ and the value of the second term depends on the $i$-th entry. Hence, the polynomial $s(\mathbf{x})\ell_i(\mathbf{x})$ attains four values each on exactly one fourth of $\cH$ vectors.
	
	Finally, let $I\neq \emptyset$ and $i \in I$. Partition the hypercube vertices into two sets depending on the $i$-th entry. Each set has cardinality $2^{n-1}$. Consider the set with $x_i=1$. For the vectors in this set the second term takes a constant value $2+c$. Over this set the polynomial $s$ takes one of the values $\sum_{j=0}^rf_{{\boldsymbol{\alpha}}(j)}(\mathbf{x}) \pm f_{\boldsymbol{\beta}}(\mathbf{x})$, depending on whether $\mathbf{x}$ has an odd or even number of $-1$ entries in the set $I\setminus \{-1\}$. In both cases the number of such vectors is equal to
	$$
	2^{n-|I|}\sum_{\substack{i=0, \\ i \text{ odd}}}^{|I|-1} 2^i \ = \ 2^{n-|I|} 2^{|I|-2}=2^{n-2}. 
	$$
	The analysis for the case $x_i=-1$ is analogous.

\end{proof}

Now we can provide the proof of Theorem \ref{thm:counterexample}.

\begin{proof}{(Proof of Theorem~\ref{thm:counterexample})}

Assume $f_a \in S_d$ for some $a \in \mathbb{N}$ and $d \in \N$. We prove that $a$ has to be smaller or equal than $\frac{2^n-1}{2^{n-2}-1}$.

\medskip

Since $f_a \in S_d$ we know that
\begin{eqnarray*}
f_a(\mathbf{x})& = & s_0(\mathbf{x}) + \sum_{i = 1}^n s_i(\mathbf{x}) \ell_i(\mathbf{x}) + \sum_{j = 1}^n \tilde{s}_{j}(\mathbf{x}) (x_j^2 - 1) + \tilde{s}_{j+n}(\mathbf{x}) (1 - x_j^2)
\end{eqnarray*}
with $s_0,\ldots,s_n,\tilde{s}_1,\ldots,\tilde{s}_{2n} \in C_{n,2d}$. Since $\pm (x_j^2-1)$ for $j\in[n]$ vanishes over the hypercube $\cH$, we can conclude
\begin{equation}
\label{eq:counterexample_1}
f_a(\mathbf{x}) \ = \ s_0(\mathbf{x}) + \sum_{i = 1}^n s_i(\mathbf{x}) \ell_i(\mathbf{x}) \qquad \text{ for all } \mathbf{x}\in \cH
\end{equation}
for some $s_0,s_1,\ldots,s_n \in C_{n,2d}$.

Let $s_{0,k}$, and $s_{i,j}$ be some nonnegative circuit polynomials such that $s_0=\sum_k s_{0,k}$, and $s_i= \sum_j s_{i,j}$. Thus, we get
\begin{eqnarray*}
\sum_{\mathbf{x} \in \cH} \left(  s_0(\mathbf{x}) + \sum_{i = 1}^n s_i(\mathbf{x}) \ell_i(\mathbf{x}) \right) & =&
\sum_k \sum_{\mathbf{x} \in \cH}   s_{0,k}(\mathbf{x}) +  \sum_i \sum_j \sum_{\mathbf{x} \in \cH}  s_{i,j}(\mathbf{x}) \ell_{i,j}(\mathbf{x})\\
&\geq & \sum_k 2^{n-1} s_{0,k}(\mathbf{e}) +  \sum_{i = 1}^n \sum_j 2^{n-2} s_{i,j}(\mathbf{e}) \ell_{i,j}(\mathbf{e})\\
&\geq & 2^{n-2} \left(  s_0(\mathbf{e}) + \sum_{i = 1}^n s_i(\mathbf{e}) \ell_i(\mathbf{e}) \right)\\
&=& 2^{n-2} a
\end{eqnarray*}
where the first inequality comes directly from Lemma~\ref{lem:counterexample_2_values} and~\ref{lem:counterexample_4_values} and the last equality from the fact that $f_a(\mathbf{e})=a$.

%Now suppose we want to find a collection of such nonnegative circuit polynomials $s=\sum_i s_i$ such that it takes a value $1$ when evaluated for all hypercube points but one where it takes value $a$. Since every such a polynomial $s_i$ takes the value $b_i+c_i$ for half of $\{-1,1\}^n$ vectors we have that $b_i+c_i \leq 1$ for every $i$. Moreover for one particular point the polynomial $s$ has to take the value $a$ thus at least $\sum_i b_i+c_i \geq a$. This means that  $$\sum_{x \in \{-1,1\}^n} s(x) \geq 2^{n-1} \sum_i(b_i +c_i) + 2^{n-1} \sum_i(b_i -c_i) \geq 2^{n-1} a$$

On the other hand, by the properties of the function $f_a$ and the equality~\eqref{eq:counterexample_1}, we know that
$$
\sum_{\mathbf{x} \in \cH} \left(  s_0(\mathbf{x}) + \sum_{i} s_i(\mathbf{x}) \ell_i(\mathbf{x}) \right) \ = \ 2^{n}-1+ a,
$$
which makes the subsequent inequality a necessary requirement for $f_a \in S_d$: 
$$
a \ \leq \ \frac{2^n-1}{2^{n-2}-1}.
$$

\end{proof}

Speaking from a broader perspective, we interpret Theorem \ref{thm:counterexample} as an indication that the real algebraic structures, which we use to handle sums of squares, do not apply in the same generality to SONCs. We find this not at all surprising from the point of view that in the 19th century Hilbert initially used SOS as a certificate for nonnegativity and many of the algebraic structures in question where developed afterwards with Hilbert's results in mind; see \cite{Reznick:HilbertSurvey} for a historic overview. Our previous work shows that SONCs, in contrast, can, e.g., very well be analyzed with combinatorial methods. We thus see Theorem \ref{thm:counterexample} as further evidence about the very different behavior of SONCs and SOS and as an encouragement to take methods beside the traditional real algebraic ones into account for the successful application of SONCs in the future.

\bibliographystyle{amsalpha}
\bibliography{./SONCIntegerProgramming}
 
\end{document}